\documentclass[12pt]{amsart}
\usepackage[margin=1in]{geometry}
\usepackage[english]{babel}
\usepackage[utf8]{inputenc}
\usepackage{subcaption}
\usepackage{amsmath}
\usepackage{amssymb}
\usepackage{amsfonts}
\usepackage{amsthm}
\usepackage{mathrsfs}
\usepackage[all]{xy}
\usepackage[pdftex]{graphicx}
\usepackage{color}
\usepackage{url}
\usepackage{indentfirst}
\usepackage[labelfont=bf,labelsep=period,justification=raggedright]{caption}
\usepackage{hyperref}
\usepackage{algorithm}
\usepackage{algpseudocode}
\usepackage{tkz-fct}
\usepackage{tikz}
\usetikzlibrary{shapes.geometric}
\usetikzlibrary{calc}
\usepackage{multicol}
\PassOptionsToPackage{dvipsnames,svgnames}{xcolor}
\usepackage{textcomp}

\DeclareMathOperator{\dep}{dep}
\newcommand{\mcB}{\mathcal{B}}
\newcommand{\mcT}{\mathcal{T}}
\DeclareMathOperator{\card}{card}

\newcommand{\mcA}{\mathcal{A}}

\newcommand{\mcH}{\mathcal{H}}

\newcommand{\mcO}{\mathcal{O}}

\def\multiset#1#2{\ensuremath{\left(\kern-.3em\left(\genfrac{}{}{0pt}{}{#1}{#2}\right)\kern-.3em\right)}}

\theoremstyle{plain}
\newtheorem{thm}{Theorem}
\newtheorem{lemma}[thm]{Lemma}
\newtheorem{cor}[thm]{Corollary}

\newtheorem{prop}[thm]{Proposition}

\theoremstyle{definition}
\newtheorem{defn}[thm]{Definition}

\newtheorem*{notn}{Notation}
\theoremstyle{remark}
\newtheorem{rem}[thm]{Remark}
\newtheorem*{ex}{Example}

\numberwithin{equation}{section}

\tikzset{
  every picture/.style={line cap=round, line join=round},
  edge/.style={draw=black, line width=0.9pt},
  leaf/.style={circle, draw=black, fill=white,
               line width=0.8pt, minimum size=6mm, inner sep=0pt},
  core/.style={circle, draw=black, fill=white, line width=1.1pt,
               minimum size=11mm, inner sep=1pt, text=black, font=\small},
  blob/.style={ellipse, draw=black, fill=white, line width=1.3pt,
               inner xsep=8pt, inner ysep=7pt, text=black, font=\small},
  sep/.style={circle, draw=black, fill=white, line width=1pt,
              minimum size=8mm, inner sep=1pt, font=\small},
  pinch/.style={core},
  dia/.style={leaf},
  mrg/.style={circle, draw=black, fill=black, minimum size=3pt, inner sep=0pt},
  key/.style={circle, draw=black, fill=black, minimum size=6pt, inner sep=0pt,
              line width=1.1pt},
  dclust/.style={circle, draw=black, fill=white, line width=0.8pt,
                 minimum size=5.5mm, inner sep=0pt, font=\footnotesize},
  cclust/.style={ellipse, draw=black!55, fill=white, line width=0.8pt,
                 inner xsep=2pt, inner ysep=1.5pt, font=\scriptsize, text=black!55},
  stub/.style={draw=black, line width=0.7pt},
  cstub/.style={draw=black!55, line width=0.7pt},
  brc/.style={decorate, decoration={brace, amplitude=5pt}, line width=0.7pt},
  snapshot/.style={scale=0.95, every node/.append style={transform shape}},
}

\usepackage{etoolbox}
\usepackage{pgfplots}
\pgfplotsset{compat=1.18}

\makeatletter
\newcommand{\titleaffiliation}[1]{\gdef\@titleaffiliation{#1}}
\let\@titleaffiliation\@empty

\apptocmd{\@setauthors}{\ifx\@titleaffiliation\@empty\else
    \vspace{-0.75em}{\centering\normalfont\footnotesize
      \@titleaffiliation\par
    }\fi
}{}{}
\makeatother

\begin{document}

\title{Join Order Enumeration and Optimization For Double-Star and Helix Join Graphs}
\author{Ansh Taneja}
\author{Ryan Marcus}
\titleaffiliation{University of Pennsylvania}
\email{rcmarcus@cis.upenn.edu}

\begin{abstract}
Join ordering is classic problem in data management. We highlight the connection between structure-specific dynamic programming algorithms and the mathematical enumeration of the possible merge trees from a join graph. We present two case studies, on double-star and helix shaped join graphs, that we believe carry useful techniques for a toolbox of building structure-specific join optimizers. For the double-star case, we show how the enumeration leads directly to a significantly improved search algorithm, whereas for the helix case we show that a algorithm derived from the mathematical enumeration is not necessarily better than structure-agnostic algorithms.
\end{abstract}

\maketitle

\section{Introduction}

Join order is a fundamental problem for data management~\cite{systemr}. Good join orders lead to fast query plans, while poor ones can be disastrous~\cite{howgood}. Part of the difficulty of finding good join ordering is the diversity of possible join graphs. A common algorithm for finding optimal join orders for arbitrary join graphs takes $O(3^n)$ time~\cite{dphyp}, but if the join graph is known ahead of time, we can often do much better~\cite{path_joe}: for example, the optimal join order for a star-schema join graph can be found in $O(n^2)$ time~\cite{star_schema_jo, tree_jo_theory}.\footnote{Or in $O(n \log n)$ if we insist that the central table is the root~\cite{path_joe}.}

Designing a join ordering algorithm for a specific type of join graph can be a difficult and manual process, because different join graphs can have significantly different properties. The primary contribution of this work is two worked case studies, which we believe illustrate key techniques that can be applied to other join graphs, and highlight the connection between the mathematical enumeration of possible join orders and the dynamic programming algorithms that search that space~\cite{joe_complexity}.

Our first case study looks at ``double-star'' graphs, where we find an algorithm that significantly outperforms the classical, structure-agnostic algorithms. Our second case study looks at ``helix'' graphs, where we derive an algorithm based on the enumeration that loses out in practice to structure-agnostic algorithms.

\section{Preliminaries}\label{sec:prelim}

Before we look at any particular family, we first define the objects under study (join graphs) and our objective (enumeration).
\begin{defn}[Join graph]\label{def:joingraph}
A \emph{join graph} is a finite, simple, connected, undirected graph
$G=(V,E)$. The vertices $V$ are the relations to be joined, and $uv\in E$
whenever the query contains a join predicate relating $u$ and $v$. We write
$n=|V|$.
\end{defn}

We only consider connected join graphs (if $G$ is disconnected, then every join
order has to use a cross product somewhere) because both of the
families we study are connected by construction.

For a subset $U$ of $V$, we write $G[U]$ for the induced subgraph, and we say
that $U$ is \emph{connected} when $G[U]$ is connected. We say that two disjoint sets $U$
and $W$ are \emph{adjacent} when some edge of $G$ has one endpoint in $U$ and
the other in $W$.

\begin{defn}[Merge tree]\label{def:mergetree}
A \emph{merge tree} for a join graph $G=(V,E)$ is a rooted binary tree $T$
whose leaves are in bijection with $V$ and in which every node has either two
children or none, subject to the condition that for every node $x$ of $T$, the
set
\begin{align*}
C(x)=\{\,v\in V:\text{the leaf labelled } v \text{ lies in the subtree rooted at } x\,\}
\end{align*}
is connected in $G$. The children of a node are unordered. We call $C(x)$ the
\emph{cluster} at $x$, and we write $\mcT(G)$ for the set of merge trees for
$G$ and $f(G)=|\mcT(G)|$.
\end{defn}

A merge tree is the same thing as a join order. Each leaf is a base relation,
each internal node $x$ with children $y$ and $z$ is the join of the subplans
over $C(y)$ and $C(z)$, and the root cluster is all of $V$. We use the two
terms interchangeably. There are three important choices built into this definition:

\begin{rem}\label{rem:model}
\emph{(i) We allow bushy plans.} We put no restriction at all on the shape of
$T$; left-deep and right-deep plans are the special case where every internal
node has a leaf child.

\emph{(ii) We do not allow cross products.} This is what the connectivity
requirement on $C(x)$ does for us. We can also state that requirement locally:
at every internal node, the two child clusters must be adjacent. Indeed, if
$C(y)$ and $C(z)$ are connected and adjacent then $C(y)\cup C(z)$ is
connected, and going the other way, if $C(y)\cup C(z)$ is connected and
$C(y)$ and $C(z)$ partition it, then some edge has to cross between them. Note
that if we dropped the requirement entirely we would be left with $(2n-3)!!$
trees, a number depending only on $n$; for the $4$-cycle this gives $15$,
whereas Theorem~\ref{thm:h1} gives $10$.

\emph{(iii) We identify plans that differ by commutativity, but not by
associativity.} Since the children of a node are unordered, $R\bowtie S$ and
$S\bowtie R$ are the same merge tree. On the other hand $(R\bowtie S)\bowtie T$
and $R\bowtie(S\bowtie T)$ are different ones. The ``sided-ness'' of a join can generally be determined by local size estimates in practice (e.g., making the larger subplan the ``probe'' side of a hash join).
\end{rem}

It is often easier to read a merge tree from the bottom up, as a process: we
start with the $n$ singleton clusters, and we repeatedly replace two adjacent
clusters by their union until only one cluster is left. The next definition
gives us the language for this.

\begin{defn}[Absorption, depth]\label{def:absorb}
Let $T$ be a merge tree and let $x$ be an internal node with children $y$ and
$z$. We say that the merge at $x$ \emph{joins} $C(y)$ and $C(z)$, and for a
vertex $v\in C(y)$ we say that the cluster containing $v$ \emph{absorbs}
$C(z)$ at $x$. The \emph{depth} of a leaf $v$, written $\dep_T(v)$, is the
number of edges on the path from the root to that leaf.
\end{defn}

\begin{lemma}\label{lem:depth}
For every merge tree $T$ and every $v\in V$, the cluster containing $v$
absorbs exactly $\dep_T(v)$ clusters over the course of the process.
\end{lemma}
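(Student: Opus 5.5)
The plan is to identify the events ``the cluster containing $v$ absorbs a cluster'' with the internal nodes on the root-to-leaf path of $v$ in $T$. Let $v=x_0,x_1,\dots,x_d$ be the path from the leaf labelled $v$ up to the root, so that $d=\dep_T(v)$ and each $x_{i}$ is the parent of $x_{i-1}$.

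\textbf{Step 1: each node on the path gives one absorption.} Every internal node $x_i$ with $1\le i\le d$ has two children. One of them is $x_{i-1}$, and we call the other one $s_i$ (the sibling of $x_{i-1}$). Since $v\in C(x_{i-1})$, Definition~\ref{def:absorb} says that at $x_i$ the cluster containing $v$, namely $C(x_{i-1})$, absorbs $C(s_i)$. This yields $d$ absorption events, one at each of $x_1,\dots,x_d$. These events are distinct because they occur at distinct nodes of $T$, and each merge of the process happens at exactly one node.

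\textbf{Step 2: no other absorptions involve $v$.} Take any internal node $x$ at which the cluster containing $v$ absorbs something. By Definition~\ref{def:absorb}, $v$ lies in one of the child clusters of $x$, say $C(y)$. Then the leaf $v$ lies in the subtree rooted at $y$, so $y$ is an ancestor of (or equal to) the leaf $v$. Hence $y=x_{i-1}$ for some $i$, and therefore $x=x_i$.

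\textbf{Step 3: each path node is counted once.} The clusters $C(y)$ and $C(z)$ of the two children of $x$ are disjoint, because the leaves are in bijection with $V$. So $v$ lies in exactly one of them. This means each $x_i$ contributes exactly one absorption by the cluster containing $v$, not two. Combining Steps 1--3, the number of absorptions is exactly $d=\dep_T(v)$.

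The main point requiring care is Step 2, which must justify that $v\in C(y)$ forces $y$ onto the path from the root to $v$. This follows directly from the definition of $C(y)$ as the set of leaves below $y$, since the ancestors of a leaf in a rooted tree are exactly the nodes on its root path. The remaining steps are routine bookkeeping.
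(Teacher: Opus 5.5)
Your proposal is correct and follows the same route as the paper: both identify the absorptions made by $v$'s cluster with the internal nodes on the path from the leaf $v$ to the root, one per node. Your Steps 2 and 3 make explicit the converse direction and the one-absorption-per-node bookkeeping that the paper compresses into the phrase ``correspond exactly.''
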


\begin{proof}
Look at the path from the root down to the leaf $v$. At each internal node on
this path, one child is the cluster containing $v$ and the other child is the
cluster that gets absorbed, so the clusters absorbed by $v$'s cluster
correspond exactly to the internal nodes on this path. There are $\dep_T(v)$
of them.
\end{proof}

\begin{rem}\label{rem:order}
A merge tree only fixes the relative order of two merges when one of them lies
below the other on a root-to-leaf path; merges that are incomparable are never
scheduled against each other. So whenever we speak below of the \emph{order}
in which clusters are absorbed, or of \emph{interleaving} the merges of two
subgraphs, we mean the relative order along such a path, and not a global
schedule of merges.
\end{rem}

Finally, we record the recursion that we get by looking at the root of a merge
tree. We use it as a proof device, and it is also what lets us compute counts
by brute force in Section~\ref{sec:bench}.

\begin{lemma}\label{lem:split}
For a connected join graph $G=(V,E)$ with $|V|\ge 2$,
\begin{align*}
f(G)=\frac{1}{2}\sum_{\substack{\emptyset\ne U\subsetneq V\\ U,\,V\setminus U \text{ connected}}}
f\bigl(G[U]\bigr)\,f\bigl(G[V\setminus U]\bigr),
\end{align*}
with $f(G)=1$ when $|V|=1$. The factor $\tfrac12$ is there because each root
split gets counted once as $(U,V\setminus U)$ and once as $(V\setminus U,U)$.
\end{lemma}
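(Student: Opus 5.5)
We prove the identity by a bijection obtained from the root of a merge tree. Fix a connected $G=(V,E)$ with $|V|\ge 2$. We count \emph{ordered} triples $(U,T_1,T_2)$, where $\emptyset\ne U\subsetneq V$, both $U$ and $V\setminus U$ are connected, $T_1\in\mcT(G[U])$ and $T_2\in\mcT(G[V\setminus U])$. We then show that these triples correspond two-to-one with $\mcT(G)$. The base case $|V|=1$ is immediate, since the single-leaf tree is the only merge tree.

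\textbf{Forward map.} Let $T\in\mcT(G)$. Since $|V|\ge 2$, the root $r$ is internal, so it has two children $y,z$ and the clusters $C(y),C(z)$ partition $V$ into nonempty parts.

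Each cluster is connected by Definition~\ref{def:mergetree}. For every node $x$ in the subtree at $y$, the cluster $C(x)\subseteq C(y)$ is the same set whether it is computed in $T$ or in the subtree. It is connected in $G$, and hence also in $G[C(y)]$, because induced subgraphs keep all edges among the chosen vertices.

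So the subtree at $y$ lies in $\mcT(G[C(y)])$, and likewise the subtree at $z$. Because children are unordered, $T$ yields exactly two ordered triples: $(C(y),T_y,T_z)$ and $(C(z),T_z,T_y)$. These are distinct, since $C(y)\ne C(z)$.

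\textbf{Inverse map.} Given a triple $(U,T_1,T_2)$, form $T$ by putting a new root above $T_1$ and $T_2$. The leaves of $T$ are in bijection with $U\sqcup(V\setminus U)=V$.

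Every non-root cluster is a cluster of $T_1$ or $T_2$. It is connected in the corresponding induced subgraph, and therefore in $G$. The root cluster is $V$, which is connected by hypothesis. So $T\in\mcT(G)$.

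The two constructions are mutually inverse up to swapping the ordered pair. In particular, the triples $(U,T_1,T_2)$ and $(V\setminus U,T_2,T_1)$ give the same tree, and no other triple gives that tree. The reason is that the tree determines its root split $\{C(y),C(z)\}$ and the two subtrees uniquely.

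\textbf{Counting.} For a fixed $U$ the number of triples is $f(G[U])f(G[V\setminus U])$. Summing over all admissible $U$ counts each merge tree exactly twice, which gives the factor $\tfrac12$.

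The only step needing care is the claim that a subtree is a merge tree of the \emph{induced} subgraph, together with its converse. That is, connectivity in $G$ of a subset $S\subseteq U$ is equivalent to connectivity in $G[U]$. This holds because $G[S]=(G[U])[S]$.
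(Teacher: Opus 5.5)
Your proof is correct and takes the same approach as the paper. The paper gives no separate proof of this lemma; it justifies the recursion by the same root-split decomposition, which it spells out in the proof of Lemma~\ref{lem:bellman} (the two root subtrees are merge trees of $G[L]$ and $G[U\setminus L]$, and conversely), with the $\tfrac12$ accounting for the ordered double count. Your explicit two-to-one correspondence on ordered triples and the observation $G[S]=(G[U])[S]$ make rigorous what the paper leaves implicit.
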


\section{The Graph Families}\label{sec:families}

The first graph family we will look at is the double-star graph family.

\begin{defn}\label{def:dstar}
A double star graph consists of two star graphs with distinct centers that
share exactly one spoke. An example of a double-star graph is shown in Figure~\ref{fig:dstar}.
\end{defn}

Double-star graphs can be viewed as a generalization of the common ``star schema'' join graph.

\begin{figure}[htbp]
\centering
\begin{tikzpicture}
\node[circle,draw] (a) at (3,0) {$H_1$};
\node[circle, draw] (b) at (1.5,0) {};
\node[circle, draw] (c) at (1.5,-1) {};
\node[circle, draw] (d) at (1.5,1) {};
\node[circle,draw] (e) at (5,0) {$S$};
\node[circle,draw] (f) at (7,0) {$H_2$};
\node[circle, draw] (g) at (8.5,1.5) {};
\node[circle, draw] (h) at (8.5,0.5) {};
\node[circle, draw] (i) at (8.5,-1.5) {};
\node[circle, draw] (j) at (8.5,-0.5) {};
\draw (d)--(a)--(b)--(a)--(c)--(a)--(e)--(f)--(g)--(f)--(h)--(f)--(i)--(f)--(j);
\end{tikzpicture}
\caption{An example of a double-star graph; $H_1$ and $H_2$ are the centers of
our two star graphs, and $S$ is the shared spoke.}
\label{fig:dstar}
\end{figure}

We parameterize double-star graphs by looking at the number of exclusive spokes
that each individual star has: for example, in Figure~\ref{fig:dstar}, star 1
has 3 exclusive spokes while star 2 has 4 exclusive spokes. We use the notation
$S_{a,b}$ to refer to a double-star where stars 1 and 2 have $a$ and $b$
exclusive spokes, respectively; the graph above would be $S_{3,4}$.

The second family of join graphs we consider is the helix graph.

\begin{defn}\label{def:helix}
A helix graph consists of a chain of diamonds with common pinchpoints. An example is shown in Figure~\ref{fig:helix}.

\end{defn}

\begin{figure}[htbp]
\centering
\begin{tikzpicture}
    \node[circle,draw] (a) at (0,0) {$P_0$};
    \node[circle,draw] (b) at (1.5,1) {};
    \node[circle,draw] (c) at (1.5,-1) {};
    \node[circle,draw] (d) at (3,0) {$P_1$};
    \node[circle,draw] (e) at (4.5,1) {};
    \node[circle,draw] (f) at (4.5,-1) {};
    \node[circle,draw] (g) at (6,0) {$P_2$};
    \draw (a)--(b)--(d)--(e)--(g)--(f)--(d)--(c)--(a);
\end{tikzpicture}
\caption{An example of a helix graph; the $P_i$ are referred to as pinchpoints.
The graph shown is $\mcH_2$.}
\label{fig:helix}
\end{figure}

We refer to an $n$-helix as a helix with $n+1$ pinchpoints, or $n$ diamonds,
and we write $\mcH_n$ to refer to the $n$-helix.

\section{Join Order Enumerations}\label{sec:enum}

In this section, we demonstrate how to count the number of possible join orders for double-star and helix graphs. This serves two purposes: first, we believe our techniques, while demonstrated on these two specific classes of graphs, provide a general framework applicable to more graph structures, and second, a precise enumeration allows us to bound the optimization search space, leading to worst-case time complexity bounds for join order optimization. Our work follows a long tradition of studying the releaitonship between join graphs and their optimal enumerations~\cite{joe_complexity}.

For the double-star graph (Section~\ref{sec:enum-dstar}), we observe that the construction of a valid merge tree must follow a fixed number of ``phases,'' each of which in amenable to analysis. For the helix graph (Section~\ref{sec:enum-helix}, we demonstrate a more complex derivation by induction.

\subsection{The double star}\label{sec:enum-dstar}

We start with the double-star graph. Observe that we have four phases in any
join order of a graph of this form, as illustrated in
Figure~\ref{fig:dstar-phases}.

\begin{figure}[htbp]
\centering
\begin{subfigure}{\linewidth}\centering
\begin{tikzpicture}
  \node[core] (H1) at (0,0)   {$H_1$};
  \node[sep]  (S)  at (2.6,0) {$S$};
  \node[core] (H2) at (5.2,0) {$H_2$};
  \foreach \i/\y in {1/1,2/0,3/-1}              \node[leaf] (L\i) at (-2,\y) {};
  \foreach \i/\y in {1/1.5,2/0.5,3/-0.5,4/-1.5} \node[leaf] (R\i) at (7.2,\y) {};
  \begin{scope}[edge]
    \foreach \i in {1,2,3}   \draw (H1) -- (L\i);
    \foreach \i in {1,...,4} \draw (H2) -- (R\i);
    \draw (H1) -- (S) -- (H2);
  \end{scope}
\end{tikzpicture}
\caption{The original graph; no merges have taken place.}
\end{subfigure}

\medskip
\begin{subfigure}{\linewidth}\centering
\begin{tikzpicture}
  \node[blob] (A) at (0,0)   {$H_1,\,S,\,\dots$};
  \node[blob] (B) at (5.2,0) {$H_2,\,\dots$};
  \node[leaf] (L)  at (-3,0)     {};
  \node[leaf] (R1) at (7.4,0.6)  {};
  \node[leaf] (R2) at (7.4,-0.6) {};
  \begin{scope}[edge]
    \draw (L) -- (A) -- (B);
    \draw (B) -- (R1);  \draw (B) -- (R2);
  \end{scope}
\end{tikzpicture}
\caption{After step 1: $S$ has been absorbed into $H_1$.}
\end{subfigure}

\medskip
\begin{subfigure}{\linewidth}\centering
\begin{tikzpicture}
  \node[blob] (A) at (0,0) {$H_1,\,S,\,H_2,\,\dots$};
  \node[leaf] (L)  at (-3.4,0)   {};
  \node[leaf] (R1) at (3.4,0.7)  {};
  \node[leaf] (R2) at (3.4,-0.7) {};
  \begin{scope}[edge]
    \draw (L) -- (A);  \draw (A) -- (R1);  \draw (A) -- (R2);
  \end{scope}
\end{tikzpicture}
\caption{After step 2: $H_2$ has been absorbed as well.}
\end{subfigure}

\medskip
\begin{subfigure}{\linewidth}\centering
\begin{tikzpicture}
  \node[blob] (A) at (0,0) {Fully absorbed};
\end{tikzpicture}
\caption{After step 3: everything has been absorbed.}
\end{subfigure}
\caption{Successive merges absorb the whole graph into a single vertex.}
\label{fig:dstar-phases}
\end{figure}

\begin{prop}\label{prop:dstar-sum}
The number of distinct join orders for the graph $S_{a,b}$ is
\begin{align*}
f(S_{a,b})=\sum_{a_1\le a,\text{ }b_1 \le b}
\left[\binom{a}{a_1}a_1!\binom{b}{b_1}b_1!(a+b-a_1-b_1)!\right](a_1+b_1+2).
\end{align*}
\end{prop}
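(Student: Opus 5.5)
The plan is to prove the formula with an explicit bijection. Every merge tree of $S_{a,b}$ will be described by a small amount of combinatorial data: which leaves each center absorbs before the two centers meet, and in what order; when $S$ is absorbed; and the order of the remaining absorptions. Counting this data gives exactly the summand indexed by $(a_1,b_1)$.

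\textbf{Step 1: clusters without a center are singletons.} The set $V\setminus\{H_1,H_2\}$ is independent in $S_{a,b}$: every spoke, including $S$, is adjacent only to centers. So any connected cluster avoiding both $H_1$ and $H_2$ is a single vertex. This means every merge in a merge tree takes one of two forms. Either a cluster containing a center absorbs a single spoke vertex, or a cluster containing $H_1$ and a cluster containing $H_2$ are joined. The second form can occur only once, since afterwards the two centers share a cluster.

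\textbf{Step 2: the phase structure.} Follow, via Lemma~\ref{lem:depth} and Remark~\ref{rem:order}, the root-to-leaf paths to $H_1$ and to $H_2$. These paths meet at a unique node $m$, where the $H_1$-cluster $X$ and the $H_2$-cluster $Y$ are joined.
\begin{itemize}
\item Below $m$, $X$ is built from $H_1$ by successively absorbing single vertices adjacent to $H_1$. These are a set $A_1$ of $a_1$ exclusive spokes of star~1, absorbed in some order, possibly together with $S$. Symmetrically, $Y$ absorbs a set $B_1$ of $b_1$ exclusive spokes of star~2, possibly together with $S$.
\item Since $X$ and $Y$ must be adjacent at $m$, and $H_1,H_2$ are not adjacent, $S$ lies in exactly one of $X$, $Y$.
\item Above $m$, the path to the root absorbs the remaining $a+b-a_1-b_1$ exclusive spokes one at a time. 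Each of them is adjacent to the merged cluster, so any order is allowed.
\end{itemize}

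\textbf{Step 3: the count.} Fix $(a_1,b_1)$ and count the data.
\begin{itemize}
\item The set $A_1$ and its absorption order can be chosen in $\binom{a}{a_1}a_1!$ ways; likewise $\binom{b}{b_1}b_1!$ ways for $B_1$.
\item The position of $S$ is one of the $a_1+1$ slots in $X$'s absorption sequence or one of the $b_1+1$ slots in $Y$'s, giving $a_1+b_1+2$ choices.
\item The final phase can be ordered in $(a+b-a_1-b_1)!$ ways.
\end{itemize}
Summing over all $a_1\le a$ and $b_1\le b$ then yields the formula in Proposition~\ref{prop:dstar-sum}.

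\textbf{Step 4: bijectivity.} Conversely, any such data defines a valid merge tree. Every merge joins adjacent clusters: each spoke absorbed is adjacent to the current center cluster, and at $m$ the vertex $S$ provides the edge between $X$ and $Y$. Distinct data give distinct trees. The node $m$ is determined by the tree, hence so are $A_1$, $B_1$ and the position of $S$. Each absorption order is recorded by the nesting along a root-to-leaf path, because the absorbed cluster at each step is a distinguishable singleton. Commutativity (Remark~\ref{rem:model}(iii)) causes no double counting, because $X$ and $Y$ are distinguished by which center they contain.

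The main obstacle is exactly this injectivity. One must be sure that ``order of absorption'' genuinely records the tree structure and not a global schedule. This holds here because all merges containing a given center lie on a single root-to-leaf path. I would make this rigorous by induction on the number of absorptions along the path to each center.
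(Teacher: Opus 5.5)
Your proposal is correct and follows essentially the same route as the paper. You condition on the numbers $a_1,b_1$ of exclusive spokes absorbed before the big merge, then count the two pre-merge absorption chains and the $(a+b-a_1-b_1)!$ post-merge orders; your $a_1+b_1+2$ slot choice for $S$ is just the paper's two cases, with counts $(a_1+1)!$ and $(b_1+1)!$, combined into one. Your Steps~1 and~4 (non-hub clusters are singletons, and the data determine the tree bijectively) spell out structural facts that the paper's proof here leaves informal and only proves later in Lemma~\ref{lem:shape}.
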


\begin{proof}
Note that we could also have $H_2$ merge with $S$ first, and then merge with
$H_1$.

Noting this, we can condition on the number of nodes merged before step 2 as
follows.

Fix $S_{a,b}$, and suppose $H_1$ merges with $S$ first. Now before step 2
happens, say that $a_1+1$ nodes (including $S$) get absorbed into $H_1$, and
$b_1$ nodes get absorbed into $H_2$. There are
\begin{align*}
    \binom{a}{a_1}(a_1+1)!\binom{b}{b_1}b_1!
\end{align*}
ways to choose these nodes and join them.

Now, once step 2 happens (there is only one way to do it, so it doesn't affect
our total), we have a star graph with $a-a_1+b-b_1$ nodes; there are
\begin{align*}
    (a+b-a_1-b_1)!
\end{align*}
possible join orders from here. In total, we have
\begin{align*}
    \sum_{a_1\le a,\text{ }b_1 \le b} \binom{a}{a_1}(a_1+1)!\binom{b}{b_1}b_1!(a+b-a_1-b_1)!
\end{align*}
join orders for this case.

Now, we still have one more case: namely, we have not considered the
possibility that $H_2$ was joined with $S$ before $H_1$. However, this case is
nearly identical to the previous one, as we can just swap $(a,a_1)$ and
$(b,b_1)$. Doing so, there are
\begin{align*}
    \sum_{a_1\le a,\text{ }b_1 \le b} \binom{a}{a_1}a_1!\binom{b}{b_1}(b_1+1)!(a+b-a_1-b_1)!
\end{align*}
join orders for this case, and in total, there are
\begin{align*}
    f(S_{a,b})&=\sum_{a_1\le a,\text{ }b_1 \le b} \left[\binom{a}{a_1}a_1!\binom{b}{b_1}b_1!(a+b-a_1-b_1)!\right](a_1+1+b_1+1)\\
    &=\sum_{a_1\le a,\text{ }b_1 \le b} \left[\binom{a}{a_1}a_1!\binom{b}{b_1}b_1!(a+b-a_1-b_1)!\right](a_1+b_1+2)
\end{align*}
join orders for a $S_{a,b}$ graph.
\end{proof}

\begin{rem}\label{rem:dstar-growth}
For our purposes, this is enough --- the number of possible join orders clearly
grows faster than any simple exponential function. However, through additional
algebraic manipulations, we can arrive at a simple closed form.
\end{rem}

\begin{thm}\label{thm:dstar}
The number of distinct join orders for the graph $S_{a,b}$ is
\begin{align*}
    f(S_{a,b})=a!b!\left[\binom{a+b+4}{b+2}-a-b-4\right].
\end{align*}
\end{thm}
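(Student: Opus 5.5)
The plan is to start from the double sum in Proposition~\ref{prop:dstar-sum} and reduce it to the closed form with elementary binomial identities. The main tools are the hockey-stick identity and Pascal's rule.

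\emph{Step 1: reindex.} Reindex by the number of spokes left over after step~2, writing $i=a-a_1$ and $j=b-b_1$. Since $\binom{a}{a_1}a_1!=a!/i!$ and $\binom{b}{b_1}b_1!=b!/j!$, the summand becomes $a!\,b!\,\frac{(i+j)!}{i!\,j!}\,(a+b+2-i-j)$. Hence
\begin{align*}
f(S_{a,b})=a!\,b!\sum_{i=0}^{a}\sum_{j=0}^{b}\binom{i+j}{i}\bigl(a+b+2-i-j\bigr),
\end{align*}
and it suffices to show that the double sum equals $\binom{a+b+4}{b+2}-a-b-4$.

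\emph{Step 2: an auxiliary sum.} First establish
\begin{align*}
\sum_{i\le a',\,j\le b}\binom{i+j}{i}=\binom{a'+b+2}{b+1}-1.
\end{align*}
To prove it, sum over $i$ with the hockey stick identity $\sum_{i\le a'}\binom{i+j}{j}=\binom{a'+j+1}{j+1}$. Then sum over $j$ with the hockey stick again, which gives $\sum_{j\le b}\binom{a'+1+j}{j+1}=\binom{a'+b+2}{b+1}-1$.

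\emph{Step 3: split the weight.} Write $a+b+2-i-j=(a+1-i)+(b+1-j)$. Treat the first piece by expressing $a+1-i=\sum_{a'=i}^{a}1$ and swapping the order of summation. This turns it into $\sum_{a'=0}^{a}\bigl[\binom{a'+b+2}{b+1}-1\bigr]$. One more hockey-stick sum evaluates this as $\binom{a+b+3}{b+2}-1-(a+1)$. By the symmetry $i\leftrightarrow j$, $a\leftrightarrow b$, the second piece equals $\binom{a+b+3}{a+2}-1-(b+1)$.

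\emph{Step 4: combine.} Note that $\binom{a+b+3}{a+2}=\binom{a+b+3}{b+1}$. Pascal's rule then gives $\binom{a+b+3}{b+2}+\binom{a+b+3}{b+1}=\binom{a+b+4}{b+2}$. The constants add up to $-(a+b+4)$, which yields the claimed formula.

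\emph{Main obstacle.} The main obstacle is the bookkeeping in Step~3, specifically getting the boundary terms right in the nested hockey-stick sums. The lower limits contribute the $-1$'s that produce the linear correction $-a-b-4$. I would check these against small cases such as $S_{0,0}$, where the formula gives $\binom{4}{2}-4=2$, matching the two merge trees of a $3$-path.
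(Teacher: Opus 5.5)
Your proof is correct, and it shares the paper's skeleton. Both start from Proposition~\ref{prop:dstar-sum}, reindex by the numbers of deferred spokes (your $i,j$ are the paper's $x,y$), and close with the hockey-stick identity and Pascal's rule. Your Step~2 sum is also the paper's first summation.

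The difference is in how the linear weight is handled. The paper writes it as $(a+b+2)-(x+y)$ and evaluates $\sum\binom{x+y}{x}(x+y)$ head-on. It shifts the factor $x+n$ to $x+n+1$ and applies the absorption identity $(x+n+1)\binom{x+n}{x}=(x+1)\binom{x+n+1}{x+1}$. It then repeats the shift-and-absorb step on $\sum_x(x+1)\binom{x+b+2}{x+2}$ and needs one more absorption identity, $(a+b+3)\binom{a+b+2}{b+1}=(b+2)\binom{a+b+3}{b+2}$, before Pascal.

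You instead split the weight symmetrically as $(a+1-i)+(b+1-j)$ and read $a+1-i$ as the number of $a'$ with $i\le a'\le a$. Exchanging the order of summation then turns the weighted sum into a sum of your unweighted rectangle sums. After that everything is iterated hockey-stick, and the second piece follows for free from the $a\leftrightarrow b$ symmetry.

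Your route is shorter and has fewer boundary terms to track. It also makes the symmetry of Remark~\ref{rem:dstar-sym} visible in the derivation itself: the two binomials that Pascal's rule merges at the end are exactly the contributions of the two halves of the weight. The paper's computation is more mechanical and does not depend on spotting that decomposition, but it pays for this with several extra intermediate identities.
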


\begin{proof}
We start from Proposition~\ref{prop:dstar-sum} and first break each binomial
coefficient into factorials. This lets us simplify our summation to
\begin{align*}
    f(S_{a,b})&=\sum_{a_1\le a,\text{ }b_1 \le b}\frac{a!b!(a+b-a_1-b_1)!}{(a-a_1)!(b-b_1)!}(a_1+b_1+2)\\
    &= a!b!\sum_{a_1\le a,\text{ }b_1 \le b}\binom{a+b-a_1-b_1}{a-a_1}(a_1+b_1+2).
\end{align*}
Now, let $x = a-a_1$ and $y=b-b_1$. Our sum now turns into
\begin{align*}
    f(S_{a,b})&= a!b!\sum_{x\le a,\text{ }y \le b}\binom{x+y}{x}(a+b-x-y+2),
\end{align*}
which we split as follows:
\begin{align*}
    f(S_{a,b})&= a!b!\left[(a+b+2)\sum_{x\le a,\text{ }y \le b}\binom{x+y}{x}-\sum_{x\le a,\text{ }y \le b}\binom{x+y}{x}(x+y)\right].
\end{align*}
Now, focus on the first summation. We can initially simplify it by splitting it
into two summations and using the Hockey Stick Identity:
\begin{align*}
    &\sum_{x=0}^a \sum_{y=0}^b \binom{x+y}{x}\\
    =&\sum_{x=0}^a\binom{x+b+1}{x+1}.
\end{align*}
Now, we direct our attention to the second summation,
\begin{align*}
    \sum_{x=0}^a\sum_{y=0}^b\binom{x+y}{x}(x+y).
\end{align*}
We expand the inner part of this summation as
\begin{align*}
   &x\binom{x}{x}+(x+1)\binom{x+1}{x}+(x+2)\binom{x+2}{x}+\cdots+(x+b)\binom{x+b}{x}
\end{align*}
\begin{align*}
    =(x+1)\binom{x}{x}+(x+2)\binom{x+1}{x}+(x+3)\binom{x+2}{x}+\cdots+(x+b+1)\binom{x+b}{x}
\end{align*}
\begin{align*}
    -\binom{x}{x}-\binom{x+1}{x}-\binom{x+2}{x}-\cdots-\binom{x+b}{x}.
\end{align*}
Now, every positive term on the resulting expression is of the form
\begin{align*}
    (x+n+1)\binom{x+n}{x} = \frac{(x+n+1)!}{x!n!}=(x+1)\binom{x+n+1}{x+1}.
\end{align*}
This lets us rewrite our expression as
\begin{align*}
    (x+1)\binom{x+1}{x+1}+(x+1)\binom{x+2}{x+1}+(x+1)\binom{x+3}{x+1}+\cdots+(x+1)\binom{x+b+1}{x+1}
\end{align*}
\begin{align*}
    -\binom{x}{x}-\binom{x+1}{x}-\binom{x+2}{x}-\cdots-\binom{x+b}{x}\\
\end{align*}
\begin{align*}
    =(x+1)\binom{x+b+3}{x+2}-\binom{x}{x}-\binom{x+1}{x}-\binom{x+2}{x}-\cdots-\binom{x+b}{x}\\
\end{align*}
\begin{align*}
    =(x+1)\binom{x+b+2}{x+2}-\binom{x+b+1}{x+1}.
\end{align*}
Now, we go back to the original expression,
\begin{align*}
    f(S_{a,b}) &= a!b!\left[(a+b+2)\sum_{x=0}^a\binom{x+b+1}{x+1}-\sum_{x=0}^a(x+1)\binom{x+b+2}{x+2}+\sum_{x=0}^a\binom{x+b+1}{x+1}\right]\\
    &= a!b!\left[(a+b+3)\sum_{x=0}^a\binom{x+b+1}{x+1}-\sum_{x=0}^a(x+1)\binom{x+b+2}{x+2}\right]\\
    &= a!b!\left[(a+b+3)\sum_{x=0}^a\binom{x+b+1}{b}-\sum_{x=0}^a(x+1)\binom{x+b+2}{x+2}\right].
\end{align*}
Now, look at the summation
\begin{align*}
    \sum_{x=0}^a(x+1)\binom{x+b+2}{x+2}.
\end{align*}
We can express this as
\begin{align*}
    &\sum_{x=0}^a(x+2)\binom{x+b+2}{x+2}-\sum_{x=0}^a \binom{x+b+2}{x+2}\\
    = &\sum_{x=0}^a \frac{(x+b+2)!}{(x+1)!b!}-\sum_{x=0}^a \binom{x+b+2}{b}\\
    = &\sum_{x=0}^a (b+1)\binom{x+b+2}{b+1}-\left[\binom{a+b+3}{b+1}-\binom{b}{b}-\binom{b+1}{b}\right]\\
    = &{}(b+1)\binom{a+b+3}{b+2}-\binom{a+b+3}{b+1}+1.
\end{align*}
This means that
\begin{align*}
    f(S_{a,b}) &= a!b!\left[(a+b+3)\sum_{x=0}^a\binom{x+b+1}{b}-\left[(b+1)\binom{a+b+3}{b+2}-\binom{a+b+3}{b+1}+1\right]\right]\\
    &=  a!b!\left[(a+b+3)\left[\binom{a+b+2}{b+1}-1\right]-\left[(b+1)\binom{a+b+3}{b+2}-\binom{a+b+3}{b+1}+1\right]\right]\\
    &= a!b!\left[(a+b+3)\binom{a+b+2}{b+1}-(a+b+3)-(b+1)\binom{a+b+3}{b+2}+\binom{a+b+3}{b+1}-1\right].
\end{align*}
Now, we have that
\begin{align*}
    (a+b+3)\binom{a+b+2}{b+1}=(b+2)\binom{a+b+3}{b+2},
\end{align*}
so
\begin{align*}
    f(S_{a,b})=a!b!\left[\binom{a+b+3}{b+2}+\binom{a+b+3}{b+1}-a-b-4\right].
\end{align*}
Now, using Pascal's Identity, we arrive at the stated result.
\end{proof}

\begin{rem}\label{rem:dstar-sym}
The double star $S_{a,b}$ is isomorphic to $S_{b,a}$, so $f$ must be symmetric
in $a$ and $b$; the closed form of Theorem~\ref{thm:dstar} is, since
$\binom{a+b+4}{b+2}=\binom{a+b+4}{a+2}$.
\end{rem}

\subsection{The helix}\label{sec:enum-helix}

Now, we move on to the helix.

The helix is different from the double star in that we cannot focus on any one
absorption moment. Instead, we proceed by induction on $n$. We start by looking
at $\mcH_1$, the case of a single diamond, whose vertices are the two
pinchpoints $P_0$ and $P_1$ together with $A_1$ and $B_1$. Since $A_1$ and
$B_1$ are not adjacent, they can never merge with each other directly
(Remark~\ref{rem:model}(ii)).

We can look at the absorption process from $P_0$'s point of view. Namely, we
look at how many independent clusters the cluster containing $P_0$ absorbs
before the graph is fully decomposed. For example, if $P_0$ absorbs everything
in one cluster, that means that the three other vertices must have merged into
a single cluster before any of them interacted with $P_0$.

\begin{thm}\label{thm:h1}
The number of valid join orders for $\mcH_1$ is $10$; that is, $f(\mcH_1)=10$.
\end{thm}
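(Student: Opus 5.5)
We will count merge trees of $\mcH_1$ directly, following the point of view suggested above: classify each merge tree by the number of clusters that the cluster containing $P_0$ absorbs. By Lemma~\ref{lem:depth}, this number is $\dep_T(P_0)$, which lies in $\{1,2,3\}$ since there are four leaves. The graph $\mcH_1$ is the $4$-cycle $P_0A_1P_1B_1$. The only constraint we use is Remark~\ref{rem:model}(ii): every merge joins two adjacent connected clusters. In particular, $A_1$ and $B_1$ can only become joined through $P_0$ or $P_1$.

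\emph{Depth $1$.} $P_0$ absorbs a single cluster, namely $\{A_1,P_1,B_1\}$, which must have been built first. This set induces the path $A_1P_1B_1$. A path on three vertices has exactly $2$ merge trees: $P_1$ joins one endpoint first, then the other. This gives $2$ trees.

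\emph{Depth $2$.} The root splits $V$ into $P_0$'s cluster $C\ni P_0$ and a connected complement, and $C$ was itself formed by one merge. Since $C\ne\{P_0\}$, $C$ has exactly two vertices, so the root split is $2+2$. There are two ways to choose $C$.
\begin{itemize}
\item[] If $C=\{P_0,A_1\}$, the complement is $\{P_1,B_1\}$, which is connected.
\item[] If $C=\{P_0,B_1\}$, the complement is $\{P_1,A_1\}$, which is connected.
\end{itemize}
Each such split yields exactly one tree, which gives $2$ trees. We also have to consider $P_0$ absorbing a $2$-cluster first and then a singleton. Here $P_0$ first absorbs $\{A_1,P_1\}$ or $\{B_1,P_1\}$, and then the remaining vertex, which is adjacent to $P_1$. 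This gives $2$ more trees, so this case contributes $4$ in total.

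\emph{Depth $3$.} $P_0$ absorbs three singletons one at a time, and each must be adjacent to the current cluster. The first must be $A_1$ or $B_1$. After that, both remaining vertices are adjacent to the cluster, so either order works. This gives $2\cdot 2=4$ trees.

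Summing gives $2+4+4=10$. The only delicate step is making sure the depth-$2$ case is complete, i.e.\ that both sub-patterns are counted and nothing is double counted. As a cross-check we will apply Lemma~\ref{lem:split}. The connected splits with connected complements are of two kinds:
\begin{itemize}
\item[] four $1+3$ splits, each contributing $f(\text{path on }3)\cdot 1 = 2$ trees, for $8$;
\item[] two unordered $2+2$ splits into opposite edges, each contributing $1$ tree, for $2$.
\end{itemize}
This again gives $f(\mcH_1)=8+2=10$.
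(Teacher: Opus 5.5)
Your proof is correct and is essentially the paper's argument: you case on the number of clusters $P_0$ absorbs and get the same subtotals $2$, $4$, $4$, and your Lemma~\ref{lem:split} cross-check ($8+2$) is a valid independent confirmation. One wording slip needs fixing. In the depth-$2$ case, the sentence ``$C$ has exactly two vertices, so the root split is $2+2$'' is false as stated, since $C=\{P_0\}\cup X$ with $|X|\in\{1,2\}$. You count the $|X|=2$ sub-case immediately afterward, so the total is right, but that sentence should say $C$ has two or three vertices.
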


\begin{proof}
We case on the number of clusters that $P_0$ absorbs.

Our first case is if $P_0$ absorbs one cluster. In this case $P_1$, $A_1$, and
$B_1$ all merged together first, and then that cluster merged with $P_0$. There
are two ways this could have happened: the order of absorption within that
cluster could have been
\begin{align*}
    P_1\rightarrow A_1\rightarrow B_1\rightarrow P_0
\end{align*}
or
\begin{align*}
    P_1\rightarrow B_1\rightarrow A_1\rightarrow P_0.
\end{align*}
Therefore, the subtotal for this case is $2$.

Our next case is if $P_0$ absorbs two clusters. Here, one of the two clusters
must contain two vertices and the other must consist of a single vertex. There
are two possibilities for the cluster of size two, since $A_1$ cannot merge
with $B_1$, and once that cluster is chosen the cluster of size one is fixed.
Afterwards, $P_0$ can absorb these two clusters in $2!=2$ different orders.
Therefore, the subtotal for this case is $2\cdot 2=4$.

Now, the final case is if $P_0$ absorbs three clusters. This means that it
absorbs each of the other vertices independently, so the subtotal for this case
is $3!-2!=4$: we count all the different orders of absorption, then subtract
those in which $P_1$ is the vertex absorbed first, which is impossible because
$P_1$ is not adjacent to $P_0$.

Adding the three subtotals gives $2+4+4=10$.
\end{proof}

We formalize this counting scheme.

\begin{defn}\label{def:E}
For $n\ge 1$ and $d\ge 1$, let $E_n(d)$ be the number of merge trees of
$\mcH_n$ in which the cluster containing the end pinch point $P_0$ absorbs
exactly $d$ clusters over the course of the process. Equivalently
(Lemma~\ref{lem:depth}), $E_n(d)$ counts the merge trees in which the leaf
$P_0$ sits at depth $d$.
\end{defn}

By construction $f(\mcH_n)=\sum_{d\ge 1}E_n(d)$, a finite sum since
$1\le\dep_T(P_0)\le 3n$, and the proof of Theorem~\ref{thm:h1} shows that
\begin{align*}
    E_1(1)=2,\qquad E_1(2)=4,\qquad E_1(3)=4.
\end{align*}
Using this function, we can compute the number of valid join orders for
$\mcH_2$.

\begin{thm}\label{thm:h2}
The number of valid join orders for $\mcH_2$ is $856$; that is,
$f(\mcH_2)=856$.
\end{thm}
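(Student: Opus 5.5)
We will prove $f(\mcH_2)=856$ with the root-split recursion of Lemma~\ref{lem:split}, instead of pushing the $E_n(d)$ bookkeeping through a second diamond. Label the vertices of $\mcH_2$ as $P_0,A_1,B_1,P_1,A_2,B_2,P_2$, with edges $P_0A_1, P_0B_1, A_1P_1, B_1P_1, P_1A_2, P_1B_2, A_2P_2, B_2P_2$. There are only $7$ vertices, so we can list every unordered split $\{U,V\setminus U\}$ with both sides connected. We then evaluate $f(G[U])\,f(G[V\setminus U])$ for each split and add. Instead of using the factor $\tfrac12$, we sum over unordered splits directly.

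\textbf{Step 1 (classify the splits).} We use the symmetries of $\mcH_2$ to cut down the work. These are the swaps $A_1\leftrightarrow B_1$ and $A_2\leftrightarrow B_2$, and the reflection $P_0\leftrightarrow P_2$ that exchanges the two diamonds. $P_1$ is a cut vertex, so it lies on one side, say in $U$. The other side $W=V\setminus U$ must be connected and must avoid $P_1$. Hence $W$ lies inside the left diamond minus $P_1$ (the path $A_1P_0B_1$) or inside the right diamond minus $P_1$. So $W$ is a nonempty subpath of $A_1P_0B_1$ (or of $A_2P_2B_2$) whose complement $U$ is connected. Removing just $P_0$ leaves $U$ connected. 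Removing $A_1$, $\{A_1,P_0\}$, or the whole path $\{A_1,P_0,B_1\}$ also works. By the reflection the same options occur on the right. Up to symmetry, the splits are therefore
\[
W\in\{\{P_0\},\ \{A_1\},\ \{A_1,P_0\},\ \{A_1,P_0,B_1\}\},
\]
with multiplicities $1,2,2,1$ on each side. This gives $12$ unordered splits in total.

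\textbf{Step 2 (counts for the pieces).} Each split needs the merge-tree count of its two induced subgraphs, and we get these from earlier results or from Lemma~\ref{lem:split} again.

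The small sides $W$ are paths on $1$, $2$ or $3$ vertices. A path on $k$ vertices has Catalan-number counts $C_{k-1}$, namely $1,1,2$.

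The large sides $U$ are as follows:
\begin{itemize}
\item For $W=\{A_1,P_0,B_1\}$, $U$ is a single diamond, so $f=10$ by Theorem~\ref{thm:h1}.
\item For $W=\{P_0\}$, $U$ is the $4$-cycle with the pendant path $A_1P_1B_1$ meeting it at $P_1$. Equivalently, $U$ is the diamond $P_1A_2P_2B_2$ with two pendant leaves $A_1,B_1$ attached at $P_1$.
\item For $W=\{A_1\}$, $U$ is the $6$-vertex graph made of a diamond and a path $P_1B_1P_0$ hanging off $P_1$.
\item For $W=\{A_1,P_0\}$, $U$ is the diamond with the single leaf $B_1$ at $P_1$.
\end{itemize}
We compute the last three counts by applying Lemma~\ref{lem:split} to these smaller graphs. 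In each we again split at the cut vertex $P_1$. For instance, the diamond with one leaf $\ell$ at $P_1$ has splits $\{\ell\}$ vs.\ the diamond, $\{P_2\}$ vs.\ the rest, $\{A_2\}$ or $\{B_2\}$ vs.\ the rest, and so on. This is a finite recursive computation in which the bottom-level graphs are paths, stars and the $4$-cycle (count $10$).

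\textbf{Step 3 (assemble and check).} We add the $12$ products, weighted by their multiplicities, and verify that the total is $856$. As an independent sanity check, we can also organize the count from $P_0$'s viewpoint as in Theorem~\ref{thm:h1}. There we condition on $\dep_T(P_0)$ and reuse $E_1(1)=2,\ E_1(2)=4,\ E_1(3)=4$ for the absorption pattern of the right diamond around $P_1$. This is the inductive scheme the authors presumably intend for general $n$.

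\textbf{Main obstacle.} The main obstacle is the intermediate $5$- and $6$-vertex graphs in Step~2, where the counts are easy to get wrong. These are the diamond with pendant pieces at $P_1$. Their root splits mix a cut vertex with a cycle, and a missed or double-counted split propagates directly into the final total. We plan to list their splits with the same cut-vertex argument used in Step 1 before multiplying anything out.
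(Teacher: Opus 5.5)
Your route is correct but genuinely different from the paper's.

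\textbf{The paper's route.} The paper cuts once at the middle pinch point $P_1$. Every cluster avoiding $P_1$ lies inside one diamond. So a merge tree of $\mcH_2$ is a merge tree of each diamond together with an interleaving of the $\alpha$ and $\beta$ absorptions along $P_1$'s root path. This gives $\sum_{\alpha,\beta}E_1(\alpha)E_1(\beta)\binom{\alpha+\beta}{\alpha}=856$ in one line, with no auxiliary graphs.

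\textbf{Your route, and what is missing from it.} You apply Lemma~\ref{lem:split} at the root. Your Step~1 is right: there are $12$ splits, with multiplicities $1,2,2,1$ on each side. The plan does close. But as written you never produce the numbers, and for a statement like this they are the whole proof. The three auxiliary counts are as follows.
\begin{itemize}
\item The diamond with one pendant leaf at $P_1$ has $32$ merge trees.
\item The diamond with two pendant leaves at $P_1$ has $140$.
\item The diamond with the path $P_1B_1P_0$ attached has $102$.
\end{itemize}
So each side contributes $140+2\cdot 102+2\cdot 32+10\cdot 2=428$, and the total is $2\cdot 428=856$.

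One correction to your Step~2: the recursion does not bottom out only in paths, stars and the $4$-cycle. Deleting $A_2$ from the two-leaf graph leaves a spider on $\{A_1,B_1,P_1,B_2,P_2\}$. It has $18$ merge trees and needs one more level of recursion.

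\textbf{What each approach buys.} Yours is a purely mechanical consequence of Lemma~\ref{lem:split}. It needs no argument that interleavings correspond bijectively to merge trees, and it is the same recursion the paper uses for its brute-force counts. The paper's is far shorter and is the seed of the $K(a,d)$ recursion.

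The quickest way to get your three auxiliary counts is in fact the paper's cut-vertex interleaving at $P_1$. Here $E_1(\beta)$ is read as the depth distribution of $P_1$ in the right diamond:
\begin{itemize}
\item $\sum_\beta E_1(\beta)(\beta+1)=32$;
\item $\sum_\beta E_1(\beta)(\beta+1)(\beta+2)=140$;
\item $\sum_\beta E_1(\beta)\bigl[(\beta+1)+\binom{\beta+2}{2}\bigr]=102$.
\end{itemize}
Likewise, for your sanity check the natural viewpoint is $P_1$, not $P_0$. Conditioning on $\dep_T(P_0)$ leads into the heavier $K(a,d)$ bookkeeping.
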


\begin{proof}
We look at the merges from the point of view of the middle pinch point $P_1$.
It is an end pinch point for both of the diamonds in the graph, so we may apply
$E_1$ to each of them separately.

In particular, any merge not involving $P_1$ is local: a merge inside one
diamond does not affect the other diamond. Therefore we can focus on just the
merges involving the cluster containing $P_1$.

Suppose that the first diamond is absorbed into that cluster in $\alpha$
clusters and the second is absorbed in $\beta$ clusters. The relative order of
the merges contributed by a single diamond is already accounted for by $E_1$,
so all that is left to do is to interleave the $\alpha$ merges coming from the
first diamond with the $\beta$ merges coming from the second. Every one of
these merges involves the cluster containing $P_1$, so distinct interleavings
give distinct merge trees, and there are $\binom{\alpha+\beta}{\alpha}$ ways to
carry it out. The total is therefore
\begin{align*}
    f(\mcH_2)=\sum_{1\leq \alpha,\beta \leq 3} E_1(\alpha)\,E_1(\beta)\binom{\alpha+\beta}{\alpha},
\end{align*}
and summing over all nine pairs gives $856$.
\end{proof}

This approach does not work for larger $n$, since there is no single pinch
point that is an end pinch point of every diamond. However, we can build
$\mcH_n$ out of $\mcH_{n-1}$ one diamond at a time, and the key observation is
that the cost of gluing on one more diamond does not depend on how long the
existing chain is. This motivates the following definition.

\begin{defn}\label{def:K}
Let $K(a,d)$ denote the number of ways of extending a merge tree of
$\mcH_{n-1}$ counted by $E_{n-1}(a)$ to a merge tree of $\mcH_n$ counted by
$E_n(d)$; that is, the number of ways of attaching one new diamond and
interleaving its merges so that the end pinch point of the longer helix absorbs
$d$ clusters in total.
\end{defn}

Because gluing on a diamond is a purely local operation, $K(a,d)$ is
independent of $n$, and so
\begin{align}\label{eq:recursion}
    E_n(d) = \sum_{a} E_{n-1}(a)\,K(a,d)
\end{align}
holds for all $n\ge 2$. Adding one more diamond can move the absorption count
in either direction. The new end pinch point absorbs somewhere between one and
three clusters made out of the diamond's own vertices, together with however
many of the $a$ existing chain clusters end up above the gluing point, and that
last number can be anything from $0$ to $a$. So $d$ ranges from $1$ to $a+3$,
and $K(a,d)=0$ outside that range; for a fixed $d$, the terms contributing to
\eqref{eq:recursion} are those with $a\ge d-3$, with no upper bound on $a$. It
remains to determine $K(a,d)$.

\begin{thm}\label{thm:K}
For integers $a\ge 0$ and $d\ge 1$,
\[
K(a,d)=
\begin{cases}
(a+1)(a+2), & d=1,\\[2pt]
(a+1)(a+4), & d=2,\\[2pt]
(a+1)(a+4)-d(d-3), & 3\le d\le a+3,\\[2pt]
0, & \text{otherwise.}
\end{cases}
\]
\end{thm}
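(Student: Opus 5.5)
We work directly with Definition~\ref{def:K}. Fix a merge tree $T'$ of $\mcH_{n-1}$ in which the end pinch point (call it $P$) absorbs the clusters $c_1,\dots,c_a$ in this order along its root path; this order is well defined by Remark~\ref{rem:order}. Attach a new diamond with vertices $A,B$ and a new end pinch point $Q$, where $A,B$ are adjacent to both $Q$ and $P$, and $Q$ is adjacent only to $A$ and $B$. The diamond touches the old helix only at $P$. So every extension of $T'$ is determined by two things: how $A$, $B$ and $Q$ are merged, and how these merges are interleaved with $c_1,\dots,c_a$ along the root path of $P$. All merges not involving $P$'s cluster are inherited from $T'$ unchanged. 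As a sanity check, the case $a=0$ (the old helix is the single vertex $P$) must reproduce $K(0,1)=2$, $K(0,2)=4$, $K(0,3)=4$, i.e.\ the values $E_1(1),E_1(2),E_1(3)$ from Theorem~\ref{thm:h1}.

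The key structural step is the unique merge $\mu$ that joins the cluster containing $Q$ with the cluster containing $P$. The neighbours of $A$ and $B$ are only $Q$ and $P$, and $A,B$ are not adjacent to each other. Hence before $\mu$ each of $A,B$ is either still a singleton or has already been absorbed, as a singleton, by $Q$'s cluster or by $P$'s cluster. Write the state just before $\mu$ as follows:
\begin{itemize}
\item $Q$'s cluster has absorbed $m\in\{0,1,2\}$ of $\{A,B\}$, in one of $m!$ orders, with no old material, since $Q$ touches only $A,B$;
\item $P$'s cluster has absorbed $c_1,\dots,c_j$ for some $0\le j\le a$, together with $k$ of the new vertices $A,B$, interleaved in $\binom{j+k}{k}$ ways.
\end{itemize}
Adjacency at $\mu$ (Remark~\ref{rem:model}(ii)) forces $m+k\ge 1$. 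After $\mu$, the merged cluster still has to absorb $c_{j+1},\dots,c_a$ and the $r=2-m-k$ leftover singletons among $A,B$, interleaved in $\binom{a-j+r}{r}\,r!$ ways. By Lemma~\ref{lem:depth}, $Q$'s cluster absorbs
\begin{align*}
d=m+1+(a-j)+r
\end{align*}
clusters in total: its $m$ singletons, the single cluster it meets at $\mu$, and everything absorbed after $\mu$. We also multiply by the number of ways to choose which of $A,B$ go to which role. Then $K(a,d)$ is the sum of all these products over triples $(m,k,j)$ with $m+k+r=2$, $m+k\ge1$, and $d$ as above.

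The remaining work is to evaluate this finite sum. We split by the pair $(m,k)$, which takes five admissible values:
\begin{align*}
(1,0),\ (0,1),\ (2,0),\ (1,1),\ (0,2).
\end{align*}
For each pair, $j$ is determined by $d$, namely $j=a+m+r+1-d$, so each case contributes a single explicit polynomial term in $a$ and $d$, valid only on its own range of $d$. For instance:
\begin{itemize}
\item $(2,0)$ gives $r=0$ and contributes $2\cdot 1=2$ exactly when $j=a+3-d\in[0,a]$;
\item $(0,2)$ gives $m=0$, $r=0$, hence $d=1+a-j$, and contributes $2\binom{j+2}{2}$.
\end{itemize}
The boundary values $d=1$ and $d=2$ come out differently because several cases are cut off there. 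For $d=1$ only $m=0$, $r=0$, $j=a$ survives, giving $2\binom{a+2}{2}=(a+1)(a+2)$, as claimed. The main obstacle is this bookkeeping. We must not double count the labelled choices of $A$ versus $B$, and we must get the truncation of each case's $j$-range right so that the pieces add up to $(a+1)(a+4)-d(d-3)$ uniformly for $3\le d\le a+3$. We would first tabulate the contributions for $a=0,1$ against brute force via Lemma~\ref{lem:split}, then add the five polynomial pieces symbolically. Finally we would note that $d>a+3$ is impossible, since $m+r\le2$ and $j\ge0$ give $d\le a+3$, which yields the $0$ case.
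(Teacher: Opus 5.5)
Your setup is the paper's proof in different notation: your $(m,k,r)$ is the paper's $(\delta,\gamma,\beta)$, your $j$ is its $a-g$, your five pairs $(m,k)$ are its five patterns, and $d=m+1+(a-j)+r$ is its $d=\alpha+g$. The structural reasoning is correct: $A$ and $B$ are only ever absorbed as singletons by the $Q$- or $P$-cluster, $Q$'s cluster carries no chain material before $\mu$, and $m+k\ge 1$. As written, though, the proposal is not yet a proof, for two concrete reasons.

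First, your general product omits a factor $k!$ for the order in which $P$'s cluster takes its $k$ diamond vertices. Read literally, the case $(m,k)=(0,2)$ gives $\binom{j+2}{2}$ (one role assignment, no ordering factor), so at $d=1$ it would give $\binom{a+2}{2}$ rather than $(a+1)(a+2)$. Your worked example silently uses the correct factor $2$. The clean statement is that the two labelled vertices $A,B$ fill the $m+k+r=2$ ordered slots in $2!$ ways in every case. So each case contributes exactly $2\binom{j+k}{k}\binom{a-j+r}{r}$, which is the paper's uniform weight $2$.

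Second, you stop before the computation that is the content of the theorem. The three cases you did not write out are:
\begin{itemize}
\item $(1,0)$, giving $2(d-2)$ for $3\le d\le a+3$;
\item $(1,1)$, giving $2(a+3-d)$ for $2\le d\le a+2$;
\item $(0,1)$, giving $2(d-1)(a+3-d)$ for $2\le d\le a+2$.
\end{itemize}
Add your two terms $2$ and $(a+3-d)(a+2-d)$. The $d=2$ case then gives $2(a+1)+2(a+1)+a(a+1)=(a+1)(a+4)$. For $d\ge 3$ the sum is $2a+4+(a+3-d)(a+d)=(a+1)(a+4)-d(d-3)$.

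The truncation you worry about resolves itself. Every term that leaves its range before $d=a+3$ carries a factor $(a+3-d)$ or $(a+2-d)$ that vanishes at the excluded values, so the formula holds uniformly on $3\le d\le a+3$. Tabulating $a=0,1$ against Lemma~\ref{lem:split} is a good check, but it does not replace this step.
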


\begin{proof}
We obtain an expression for $K(a,d)$ by viewing the situation as one new
diamond glued onto a longer, already-determined chain. Write $P$ for the pinch
point at which the new diamond is glued to the existing chain, write $A$ and
$B$ for the two middle vertices of the new diamond, and write $Z$ for its new
far endpoint, which is the end pinch point of the longer helix. The situation
is shown in Figure~\ref{fig:glue}.

\begin{figure}[htbp]
\centering
\begin{subfigure}{\linewidth}\centering
\begin{tikzpicture}[snapshot]
  \node[blob, minimum width=34mm, minimum height=17mm] (chain) at (0,0) {$\mcH_{n-1}$};
  \node[pinch] (P) at (chain.east) {$P$};
  \node[dia]   (A) at (3.5,1.25) {$A$};
  \node[dia]   (B) at (3.5,-1.25) {$B$};
  \node[pinch] (Z) at (5.8,0) {$Z$};
  \begin{scope}[edge]
    \draw (P) -- (A) -- (Z) -- (B) -- (P);
  \end{scope}
  \draw[brc, black!55] (2.9,-2.15) -- (6.25,-2.15)
        node[midway, below=7pt, font=\small, black]{new diamond};
  \draw[brc, black!55] (-2.15,-2.15) -- (2.15,-2.15)
        node[midway, below=7pt, font=\small, black]{$a$ clusters reach $P$ from here};
\end{tikzpicture}
\caption{One new diamond glued onto the chain at its end pinch point $P$.}
\label{fig:glue-a}
\end{subfigure}

\vspace{1.2em}

\begin{subfigure}{\linewidth}\centering
\begin{tikzpicture}[snapshot]
  \node[leaf, minimum size=7mm] (Zl) at (0,0) {$Z$};
  \node[mrg] (m1) at (0,1.0) {};
  \node[mrg] (m2) at (0,2.4) {};
  \node[key] (mp) at (0,3.5) {};
  \node[mrg] (m3) at (0,4.6) {};
  \node[mrg] (m4) at (0,6.0) {};

  \begin{scope}[edge]
    \draw (Zl) -- (m1) -- (m2) -- (mp) -- (m3);
    \draw (m4) -- (0,6.9);
  \end{scope}
  \draw[edge, dotted] (m3) -- (m4);
  \node[font=\small, right=1pt] at (0,6.9) {root};

  \node[dclust] (s1) at (1.5,1.0) {};
  \node[dclust] (s2) at (1.5,2.4) {};
  \node[dclust] (s3) at (1.5,4.6) {};
  \node[cclust] (s4) at (1.6,6.0) {chain};
  \begin{scope}[stub]
    \draw (m1) -- (s1); \draw (m2) -- (s2); \draw (m3) -- (s3);
  \end{scope}
  \draw[cstub] (m4) -- (s4);

  \draw[brc] (-0.45,2.75) -- (-0.45,0.75)
        node[midway, left=7pt, font=\small]{$\delta$};
  \draw[brc] (-0.45,6.35) -- (-0.45,4.25)
        node[midway, left=7pt, font=\small]{$\beta+g$};
  \node[left=2pt, font=\small] at (-0.5,3.5) {$1$};

  \draw[brc] (-2.35,6.9) -- (-2.35,0)
        node[midway, left=7pt, font=\small]{$d$};

  \node[leaf, minimum size=7mm] (Pl) at (4.3,0.6) {$P$};
  \node[mrg] (n1) at (4.3,1.5) {};
  \node[mrg] (n2) at (4.3,2.6) {};
  \begin{scope}[edge]
    \draw (Pl) -- (n1);
    \draw (n2) -- (mp);
  \end{scope}
  \draw[edge, dotted] (n1) -- (n2);
  \node[dclust] (t1) at (5.8,1.5) {};
  \node[cclust] (t2) at (5.9,2.6) {chain};
  \draw[stub] (n1) -- (t1);
  \draw[cstub] (n2) -- (t2);

  \draw[brc] (6.9,2.85) -- (6.9,1.25)
        node[midway, right=7pt, font=\small]{$\gamma+(a-g)$};
\end{tikzpicture}
\caption{The path from the leaf $Z$ up to the root of the merge tree.}
\label{fig:glue-b}
\end{subfigure}

\caption{Gluing one diamond onto $\mcH_{n-1}$, and the effect on the depth of
$Z$. In (b) the bold node is the merge at which $Z$'s cluster and $P$'s cluster
come together; the merges below it on the spine are the $\delta$ merges in
which $Z$ absorbs $A$ and\slash or $B$, and the merges above it are the $\beta$
remaining diamond clusters interleaved with the $g$ chain clusters placed above
$P$. The branch on the right collects the $\gamma$ diamond clusters and the
$a-g$ chain clusters that $P$'s cluster absorbs before meeting $Z$; these lie
off the root-to-$Z$ path and so leave $d$ unchanged. Circles denote clusters of
diamond vertices and grey ellipses denote clusters coming from the rest of the
chain. The figure shows the shape of the tree rather than one particular
pattern.}
\label{fig:glue}
\end{figure}

As the process runs, the cluster containing $Z$ and the cluster containing $P$
both grow, and at some point the two merge with each other. The clusters
absorbed along the way are the ones made up of the diamond's three new vertices
$A$, $B$, and $Z$, grouped into connected clusters, together with the $a$
clusters contributed by the rest of the chain, every one of which reaches $Z$
through $P$. A \emph{pattern} records how the merge tree behaves on the diamond
alone: how $A$, $B$, and $Z$ are grouped into connected clusters, and where
those clusters sit relative to the merge of the $Z$-cluster with the
$P$-cluster. It is summarized by a triple $(\alpha,\beta,\gamma)$, where

\begin{itemize}
\item $\alpha$ is the depth of $Z$ in the diamond-only merge tree, that is, the
number of merges on the path from $Z$ up to the root. This is also the number
of clusters that $Z$'s own cluster absorbs, and it is the value of $d$
\emph{before} any chain clusters are mixed in. Note that being absorbed
\emph{later} makes $\alpha$ \emph{smaller}: a vertex absorbed in the final
merge sits just one step below the root, whereas one absorbed early sits deep
in the tree.
\item $\beta$ is the number of the diamond's clusters that are absorbed
\emph{after} the cluster containing $P$, that is, the merges lying \emph{above}
$P$ on its path to the root. Inserting a chain cluster into one of these slots
pushes $Z$ one level deeper.
\item $\gamma$ is the number of the diamond's clusters that are absorbed
\emph{before} the cluster containing $P$, that is, the merges lying
\emph{below} $P$. Inserting a chain cluster into one of these slots leaves the
depth of $Z$ unchanged.
\end{itemize}

These quantities satisfy $\beta+\gamma+1$ equal to the number of clusters the
diamond is split into, and $\alpha=\delta+1+\beta$, where $\delta\in\{0,1,2\}$
is the depth of $Z$ inside its own cluster, which is $2$ when $Z$ lies in a
triple, $1$ when it lies in a pair, and $0$ when it is alone. Since the diamond
contributes only the three vertices $A$, $B$, $Z$, we have
$\delta+\beta+\gamma\le 2$, and so there are exactly five patterns, each of
weight $2$ by the symmetry between $A$ and $B$:
\[
\begin{array}{c|ccc|c}
\# & \alpha & \beta & \gamma & \text{weight}\\\hline
1 & 3 & 0 & 0 & 2\\
2 & 3 & 1 & 0 & 2\\
3 & 2 & 0 & 1 & 2\\
4 & 1 & 0 & 2 & 2\\
5 & 2 & 1 & 1 & 2
\end{array}
\]

The $a$ existing chain clusters are then interleaved with the new diamond's
merges. Writing $g=d-\alpha$ for the number of them placed above $P$, and $a-g$
for the number placed below, the count is
\begin{align*}
K(a,d)=\sum_{\text{patterns}}\mathrm{weight}\cdot
\binom{\beta+g}{g}\binom{\gamma+(a-g)}{a-g},
\qquad g=d-\alpha,
\end{align*}
each summand being present only when $0\le g\le a$, that is, when
$\alpha\le d\le \alpha+a$.

Substituting $g=d-\alpha$ into the equation for $K(a,d)$ obtained above and
simplifying the binomial coefficients via $\binom{n}{0}=1$, $\binom{n}{1}=n$,
and $\binom{n}{2}=\tfrac{n(n-1)}{2}$ gives the five contributions
\begin{align*}
T_1 &= 2\binom{g}{g}\binom{a-g}{a-g} = 2, & (3\le d\le a+3),\\
T_2 &= 2\binom{1+g}{1}\binom{a-g}{a-g} = 2(d-2), & (3\le d\le a+3),\\
T_3 &= 2\binom{g}{g}\binom{a-g+1}{1} = 2(a-d+3), & (2\le d\le a+2),\\
T_4 &= 2\binom{g}{g}\binom{a-g+2}{2} = (a-d+3)(a-d+2), & (1\le d\le a+1),\\
T_5 &= 2\binom{1+g}{1}\binom{a-g+1}{1} = 2(d-1)(a-d+3), & (2\le d\le a+2).
\end{align*}
From the lower endpoints of these ranges, the set of active terms is $T_4$ only
for $d=1$; $T_3,T_4,T_5$ for $d=2$; and all five for $d\ge 3$.

\medskip
\noindent\emph{Case $d\ge 3$.} Summing all five terms, we group the constant
and linear parts as follows:
\begin{align*}
T_1+T_2+T_3 &= 2 + 2(d-2) + 2(a-d+3)\\
            &= 2 + 2d - 4 + 2a - 2d + 6\\
            &= 2a+4.
\end{align*}
Then we factor the common $(a-d+3)$ out of the quadratic parts:
\begin{align*}
T_4+T_5 &= (a-d+3)(a-d+2) + 2(d-1)(a-d+3)\\
        &= (a-d+3)\bigl[(a-d+2)+2(d-1)\bigr]\\
        &= (a-d+3)(a+d).
\end{align*}
Adding the two pieces,
\begin{align*}
K(a,d) &= (2a+4) + (a-d+3)(a+d)\\
       &= (2a+4) + \bigl(a^2 - d^2 + 3a + 3d\bigr)\\
       &= a^2 + 5a + 4 - d^2 + 3d\\
       &= (a+1)(a+4) - d(d-3).
\end{align*}
Although $T_4$ at $d=a+2$, and $T_3$ and $T_5$ at $d=a+3$, leave their stated
ranges before $d$ reaches $a+3$, each carries a factor of $(a-d+3)$ or
$(a-d+2)$ that vanishes there, so including them anyway does not change the
total. Therefore the identity holds on the whole range $3\le d\le a+3$, and
$K(a,d)=0$ for $d>a+3$.

\medskip
\noindent\emph{Case $d=2$.} Only $T_3,T_4,T_5$ are active:
\begin{align*}
K(a,2) &= T_3 + T_4 + T_5\\
       &= 2(a+1) + (a+1)a + 2(a+1)\\
       &= (a+1)(2 + a + 2)\\
       &= (a+1)(a+4).
\end{align*}

\medskip
\noindent\emph{Case $d=1$.} Only $T_4$ is active:
\begin{align*}
K(a,1) = T_4\big|_{d=1} = (a+1)(a+2).
\end{align*}

Collecting the three cases, together with $K(a,d)=0$ for $d<1$ or $d>a+3$,
yields the stated piecewise formula.
\end{proof}

Note that $\mcH_1$ is a single diamond glued onto an empty chain, so
$E_1(d)=K(0,d)$; indeed $K(0,1)=2$, $K(0,2)=4$, and $K(0,3)=4$, matching
Theorem~\ref{thm:h1}. Using this, we can unravel the recursion for $E_n(d)$ and
obtain a summation for the whole count:

\begin{thm}\label{thm:hn}
The number of valid join orders for $\mcH_n$ is
\begin{align*}
f(\mcH_n)=\sum_{\substack{a_0=0\\ 1\le a_i\le a_{i-1}+3}}\ \prod_{i=1}^{n}K(a_{i-1},a_i).
\end{align*}
\end{thm}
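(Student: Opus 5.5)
The plan is to unroll the linear recursion \eqref{eq:recursion} by induction on $n$, starting from the observation that $E_1(d)=K(0,d)$. Concretely, I will prove by induction that for every $n\ge 1$ and every $d\ge 1$,
\begin{align*}
E_n(d)=\sum_{\substack{a_0=0,\ a_n=d\\ 1\le a_i\le a_{i-1}+3}}\ \prod_{i=1}^{n}K(a_{i-1},a_i),
\end{align*}
where the sum runs over sequences $(a_1,\dots,a_{n-1})$ with the last entry pinned to $a_n=d$. Summing over $d$ and using $f(\mcH_n)=\sum_{d\ge1}E_n(d)$ then gives the stated formula.

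The base case $n=1$ is the identity $E_1(d)=K(0,d)$. This was checked against Theorem~\ref{thm:h1} just before the statement, and it also follows conceptually: $\mcH_1$ is one diamond glued onto a trivial chain that contributes $a=0$ clusters. The single-term sum then has $a_1=d$ with $1\le d\le 3$. Outside that range both sides vanish by Theorem~\ref{thm:K}.

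For the inductive step, I substitute the hypothesis for $E_{n-1}(a)$ into \eqref{eq:recursion}, writing $a=a_{n-1}$:
\begin{align*}
E_n(d)=\sum_{a_{n-1}\ge 1}\ \sum_{\substack{a_0=0\\ 1\le a_i\le a_{i-1}+3,\ i\le n-1}} \prod_{i=1}^{n-1}K(a_{i-1},a_i)\cdot K(a_{n-1},d).
\end{align*}
The new factor $K(a_{n-1},d)$ is nonzero only when $1\le d\le a_{n-1}+3$, again by Theorem~\ref{thm:K}. So the constraint $1\le a_n\le a_{n-1}+3$ can be added at no cost, and the double sum merges into a single sum over admissible sequences ending at $a_n=d$. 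The inner sums involved are finite. This is because $E_{n-1}(a)=0$ for $a>3(n-1)$, and equivalently the constraints force $a_i\le 3i$. Consequently no convergence issue arises from the remark that the contributing $a$ have "no upper bound".

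The main point needing care is the legitimacy of \eqref{eq:recursion} for all $n\ge2$, including $n=2$. One must check that the index $a$ in \eqref{eq:recursion} ranges over $a\ge1$, as the depths $E_{n-1}(a)$ do. It must match the $a$ that Theorem~\ref{thm:K} allows, namely $a\ge0$, where $a=0$ is used only for the base case. I would state explicitly that for $n\ge2$ the depth of the old end pinch point is at least $1$, so restricting to $a_i\ge1$ for $1\le i\le n-1$ loses nothing. The paper asserts \eqref{eq:recursion} from the locality of gluing and I take it as given. With that settled, the rest is pure bookkeeping of the summation indices.
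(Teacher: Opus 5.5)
Your proposal is correct and follows the paper's own argument. Like the paper, you unroll \eqref{eq:recursion} down to $E_1(a_1)=K(0,a_1)$, use the support of $K$ from Theorem~\ref{thm:K} to impose $1\le a_i\le a_{i-1}+3$, and sum over $a_n$ using $f(\mcH_n)=\sum_{d\ge1}E_n(d)$. Your explicit induction and your remarks on finiteness ($a_i\le 3i$) and on the index range (depths are at least $1$, so $a_i\ge 1$ loses nothing) spell out what the paper compresses into ``expanding repeatedly.''
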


\begin{proof}
Expanding \eqref{eq:recursion} repeatedly from $E_n$ down to $E_1$, and using
$E_1(a_1)=K(a_0,a_1)$ with $a_0=0$, writes $E_n(a_n)$ as the sum of
$\prod_{i=1}^{n}K(a_{i-1},a_i)$ over all chains $a_0=0,a_1,\dots,a_n$ with
$1\le a_i\le a_{i-1}+3$, the range coming from the support of $K$ in
Theorem~\ref{thm:K}. Summing over $a_n$ and using
$f(\mcH_n)=\sum_{d\ge 1}E_n(d)$ gives the result.
\end{proof}

\section{Optimal Join Ordering Algorithms}\label{sec:algorithms}

The counts of Section~\ref{sec:enum} tell us that we cannot simply enumerate
the plan space. However, they also tell us how to avoid having to. We were
able to count the merge trees of these families in the first place because
those trees have a very restricted shape, and the question now is how much of
that shape we can put to work. The two families answer it differently. For the
double star the shape cuts the search down to a handful of candidates and we
get a polynomial algorithm. For the helix it does not, and what we get instead
is an exact analysis of the generic dynamic program. We take the two cases in
turn, and Section~\ref{sec:bench} measures both.

\subsection{The cost model}\label{sec:cost}

So far our join graphs have carried no numbers, so we start by attaching some. Our formulation closely follows~\cite{cross_products_joe}.

\begin{defn}[Weighted join graph]\label{def:weighted}
A \emph{weighted join graph} is a join graph $G=(V,E)$ together with a
\emph{weight} $w_v>0$ for each $v\in V$ and a \emph{selectivity}
$\sigma_e>0$ for each $e\in E$. For a connected $U\subseteq V$ we write
\begin{align*}
\card(U) \;=\; \prod_{v\in U} w_v \prod_{e\in E(G[U])} \sigma_e
\end{align*}
for the estimated cardinality of the join of $U$. We keep $|U|$ for the number
of vertices in $U$; these are two different quantities and we never write one
for the other.
\end{defn}

\begin{defn}[Cost of a merge tree]\label{def:cout}
The \emph{cost} of a merge tree $T$ for a weighted join graph is
\begin{align*}
\mathrm{cost}(T)=\sum_{x \text{ an internal node of } T} \card\bigl(C(x)\bigr).
\end{align*}
A plan is \emph{optimal} if it attains $\min_{T\in\mcT(G)}\mathrm{cost}(T)$.
\end{defn}

\begin{rem}\label{rem:cout}
This is the cost function that is usually called $C_{\mathrm{out}}$: the cost
of a plan is the total size of the intermediate results that it materializes.
Leaves cost nothing, since scanning the base relations is something every plan
has to do. Note that our cardinality model assumes that the selectivities are
independent; this is what lets us write $\card(U)$ as a function of the vertex
set alone, rather than of the tree that produced it.
\end{rem}

\begin{rem}\label{rem:F}
Note that the root cluster is $V$ for every merge tree, so
\begin{align*}
F \;=\; \card(V) \;=\; \prod_{v\in V} w_v \prod_{e\in E}\sigma_e
\end{align*}
is the same for every plan, and it shows up as a summand of $\mathrm{cost}(T)$
no matter which $T$ we pick. We use this in Lemma~\ref{lem:finalsize}.
\end{rem}

For the rest of this section, $H_1$ and $H_2$ are the hubs of $S_{a,b}$,
$\mcA$ is the set of $a$ exclusive spokes of $H_1$, $\mcB$ is the set of $b$
exclusive spokes of $H_2$, and $S$ is the shared spoke.

\begin{defn}[Fanout]\label{def:fanout}
For a spoke $u$ of a hub $h$, the \emph{fanout} of $u$ is
$\varphi_u = w_u\,\sigma_{hu}$. For the shared spoke we write
$\varphi^{(1)}_S=w_S\sigma_{H_1S}$ and $\varphi^{(2)}_S=w_S\sigma_{SH_2}$.
\end{defn}

Absorbing a spoke $u$ into a cluster multiplies the cardinality of that
cluster by $\varphi_u$. So a spoke whose fanout is less than $1$ shrinks the
running result, while a spoke whose fanout is greater than $1$ grows it.
Everything our algorithm computes is built out of these fanouts.

\subsection{Absorbing spokes one at a time}\label{sec:rank}

In a double star, both hubs spend the whole plan doing the same thing, namely
absorbing single spokes one after another. We now record the two facts we need
about this operation. The first of them is the classical rank-ordering rule.

\begin{lemma}[Rank ordering]\label{lem:rank}
Let a cluster of cardinality $N$ absorb singletons $u_1,\dots,u_m$ one at a
time, in the order given by a permutation $\pi$. The cost contributed is
\begin{align*}
N\sum_{k=1}^{m}\ \prod_{i\le k}\varphi_{\pi(i)},
\end{align*}
and it is minimized by taking $\pi$ to list the $u_i$ in ascending order of
fanout. The resulting cluster has cardinality $N\prod_{i}\varphi_{u_i}$
regardless of $\pi$.
\end{lemma}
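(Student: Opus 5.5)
The plan has three parts: compute the cost, track the cardinality, then use an exchange argument for optimality. The statement concerns only the internal nodes created by the $m$ absorptions. By Definition~\ref{def:cout}, each such node contributes the cardinality of the cluster it produces.

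\emph{Cost and final cardinality.} First I determine the cardinality after each absorption. When a connected cluster $U$ of cardinality $N$ absorbs a singleton spoke $u$ attached to it through its hub $h$, the only new edge in $G[U\cup\{u\}]$ is $hu$. So by Definition~\ref{def:weighted},
\[
\card(U\cup\{u\})=\card(U)\,w_u\sigma_{hu}=N\varphi_u .
\]
Here I use that the spokes $u_i$ are leaves of the hub being absorbed into, so each contributes exactly one edge. By induction on $k$, the cluster after the $k$-th absorption has cardinality $N\prod_{i\le k}\varphi_{\pi(i)}$. Summing over $k=1,\dots,m$ gives the stated cost formula. Taking $k=m$ gives the final cardinality $N\prod_i\varphi_{u_i}$, which is independent of $\pi$ because multiplication is commutative.

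\emph{Optimality of ascending order.} I would use an adjacent-transposition exchange argument. Fix $\pi$ and a position $j$, and let $\pi'$ be $\pi$ with entries $j$ and $j+1$ swapped. Write $P=\prod_{i<j}\varphi_{\pi(i)}$. The prefix products for $k\ne j$ are identical under $\pi$ and $\pi'$: for $k<j$ the prefixes coincide, and for $k\ge j+1$ they contain the same multiset of factors. Hence
\[
\mathrm{cost}(\pi)-\mathrm{cost}(\pi')=NP\bigl(\varphi_{\pi(j)}-\varphi_{\pi(j+1)}\bigr).
\]
Since $N>0$ and $P>0$ (all weights and selectivities are positive), swapping an adjacent inversion, where $\varphi_{\pi(j)}>\varphi_{\pi(j+1)}$, strictly lowers the cost. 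Swapping an adjacent pair of equal fanouts leaves the cost unchanged.

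Now suppose some optimal $\pi$ is not sorted. Then it has an adjacent inversion, and swapping it strictly lowers the cost, contradicting optimality. So every optimal permutation is sorted by ascending fanout. A minimizer exists because there are only finitely many permutations. All ascending orders have the same cost, since any two of them differ only by swaps of equal fanouts. Hence the ascending order attains the minimum.

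I see no real obstacle; the step needing care is the first one. The justification that each absorption multiplies the cardinality by exactly $\varphi_u$ relies on $u$ being a singleton adjacent only to the hub inside the cluster. For the shared spoke this means using the appropriate $\varphi^{(1)}_S$ or $\varphi^{(2)}_S$, according to which hub's cluster absorbs it.
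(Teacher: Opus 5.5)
Your proof is correct, but the optimality step takes a different route from the paper's. The paper inducts on $m$, for all starting cardinalities $N$ at once. It conditions on which singleton $u_\ell$ is absorbed first, and uses the inductive hypothesis with $N\varphi_\ell$ in place of $N$ to finish in ascending order. It then compares ``$u_\ell$ first'' with ``$u_1$ first'': the two sequences of prefix sets coincide from position $\ell$ onward, leaving the difference $N\sum_{j<\ell}(\varphi_\ell-\varphi_j)\prod_{i<j}\varphi_i\ge 0$.

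Your adjacent-transposition argument is shorter. One swap changes exactly one prefix product, so the difference is the single term $NP(\varphi_{\pi(j)}-\varphi_{\pi(j+1)})$ and no induction is needed. You also get the slightly stronger fact that the minimizers are exactly the ascending orders. These all tie, because they yield the same sorted fanout sequence and hence identical prefix products.

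In exchange, the paper's version compares the ascending order directly against every competitor, with no appeal to the existence of a minimizer. That appeal is harmless here, since there are finitely many permutations, and phrasing your swaps as a bubble sort would make your argument direct as well.

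Your derivation of $\card(U\cup\{u\})=N\varphi_u$ from Definition~\ref{def:weighted}, including the remark about $\varphi^{(1)}_S$ versus $\varphi^{(2)}_S$, also makes explicit a step the paper takes for granted from the sentence following Definition~\ref{def:fanout}.
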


\begin{proof}
Absorbing the first $k$ singletons produces a cluster of cardinality
$N\prod_{i\le k}\varphi_{\pi(i)}$, and the cost is the sum of these over
$k=1,\dots,m$. The final cardinality is the $k=m$ term, which does not depend
on $\pi$.

For the optimality claim we proceed by induction on $m$, proving the statement
for every cluster cardinality $N$ at once.

The case $m=1$ is trivial; there is only one order, and it is the desired one.

Now, say the claim holds for $m$ singletons, and consider a cluster that has
$m+1$ of them to absorb. Label the singletons so that
$\varphi_1\le\varphi_2\le\cdots\le\varphi_{m+1}$. Any order is determined by
which singleton gets absorbed first together with an order of the remaining
$m$. If we absorb $u_j$ first, we are left with a cluster of cardinality
$N\varphi_j$ and the other $m$ singletons untouched, so we may apply the
inductive hypothesis with $N\varphi_j$ in place of $N$; it tells us that the
cheapest way to finish is to take the rest in ascending order. So it is enough
to compare the $m+1$ candidates we get by varying the first singleton.

Then, say we absorb $u_1$ first. The total cost for this order is
\begin{align*}
a &= N\varphi_1\Bigl(1+\varphi_2+\varphi_2\varphi_3+\cdots+\prod_{i=2}^{m+1}\varphi_i\Bigr).
\end{align*}
Now, say we absorb $u_\ell$ first, where $\ell\ne 1$; then the cost of that
order is
\begin{align*}
b &= N\varphi_\ell\Bigl(1+\varphi_1+\varphi_1\varphi_2+\cdots+
      \prod_{\substack{i=1\\ i\ne\ell}}^{m+1}\varphi_i\Bigr).
\end{align*}
Both of these are sums of $m+1$ terms, where the $j$-th term is the
cardinality of the cluster after $j$ absorptions; call these terms $a_j$ and
$b_j$. After $j$ absorptions the first order has taken $\{u_1,\dots,u_j\}$,
while the second has taken $\{u_\ell\}$ together with $\{u_1,\dots,u_{j-1}\}$.
These two sets are the same once $j\ge\ell$, so $b_j=a_j$ for every such $j$.
If we subtract, only the first $\ell-1$ terms survive:
\begin{align*}
b-a &= N\Bigl[(\varphi_\ell-\varphi_1)+\varphi_1(\varphi_\ell-\varphi_2)
      +\varphi_1\varphi_2(\varphi_\ell-\varphi_3)+\cdots
      +(\varphi_\ell-\varphi_{\ell-1})\prod_{i=1}^{\ell-2}\varphi_i\Bigr]\\
    &= N\sum_{j=1}^{\ell-1}(\varphi_\ell-\varphi_j)\prod_{i<j}\varphi_i.
\end{align*}
Every fanout is positive, and $\varphi_j\le\varphi_\ell$ whenever $j<\ell$, so
every summand here is nonnegative and $b\ge a$. Therefore absorbing $u_1$
first is at least as cheap as any other choice, and together with the
inductive hypothesis for the rest, the ascending order is optimal.
\end{proof}

\begin{notn}
For a hub weight $N$ and a multiset $M$ of spokes we write
$\mathrm{chain}(N;M)$ for the minimum in Lemma~\ref{lem:rank} and
$\Lambda(N;M)=N\prod_{u\in M}\varphi_u$ for the resulting cardinality.
\end{notn}

\begin{lemma}[Monotonicity]\label{lem:mono}
$\mathrm{chain}(N;M)$ is nondecreasing in $N$ and in each fanout
$\varphi_u$, $u\in M$.
\end{lemma}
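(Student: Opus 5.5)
The plan is to reduce both claims to the explicit formula from Lemma~\ref{lem:rank}, namely
\begin{align*}
\mathrm{chain}(N;M)=\min_{\pi} N\sum_{k=1}^{m}\prod_{i\le k}\varphi_{\pi(i)},
\end{align*}
where $\pi$ ranges over the $m!$ orderings of $M$. We then use the fact that a pointwise minimum of nondecreasing functions is nondecreasing. Fix a single ordering $\pi$ and write $c_\pi(N;\varphi)=N\sum_{k}\prod_{i\le k}\varphi_{\pi(i)}$ for its cost. This is a polynomial in $N$ and the fanouts, and all its coefficients are nonnegative.

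For monotonicity in $N$, we argue as follows. Every fanout is positive by Definition~\ref{def:weighted}, so the sum is a positive constant. Hence $c_\pi$ is linear and increasing in $N$. In fact $\mathrm{chain}(N;M)=N\cdot\mathrm{chain}(1;M)$, since the minimizing order (ascending fanout) does not depend on $N$. This already gives the claim.

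For monotonicity in a single fanout $\varphi_u$, we hold $N$ and all other fanouts fixed. In each term $\prod_{i\le k}\varphi_{\pi(i)}$, either $\varphi_u$ appears exactly once, or it does not appear. In the first case the term is a positive multiple of $\varphi_u$; in the second it is constant. So each $c_\pi$ is a nondecreasing affine function of $\varphi_u$. Now suppose $\varphi_u\le\varphi_u'$, and let $\pi^*$ be optimal for the larger value $\varphi_u'$. Then
\begin{align*}
\mathrm{chain}(N;M)\big|_{\varphi_u}\le c_{\pi^*}(\varphi_u)\le c_{\pi^*}(\varphi_u')=\mathrm{chain}(N;M)\big|_{\varphi_u'}.
\end{align*}
This avoids tracking how the sorted order changes when $\varphi_u$ moves past other fanouts.

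The one subtlety I expect is that $M$ is a multiset, so one should say what "each fanout" means when spokes have equal fanouts. I would treat the elements of $M$ as distinct spokes carrying their own $\varphi_u$, and vary one at a time. The argument above does not use distinctness, so nothing further is needed. The step that would otherwise be the obstacle is the change of the optimal ordering as $\varphi_u$ varies. The min-over-orderings comparison above sidesteps it, so I would rely on that comparison rather than on the explicit sorted formula.
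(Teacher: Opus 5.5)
Your proposal is correct and follows the paper's proof. Like the paper, you fix an ordering $\pi$, note that its cost is a polynomial in $N$ and the fanouts with positive coefficients and is therefore nondecreasing in each variable, and conclude because a pointwise minimum of nondecreasing functions over the finitely many orderings is nondecreasing; your explicit $\pi^*$ comparison and the identity $\mathrm{chain}(N;M)=N\cdot\mathrm{chain}(1;M)$ spell out that final step in more detail.
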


\begin{proof}
Fix a permutation $\pi$. The expression in Lemma~\ref{lem:rank} is then a sum
of products of $N$ and the $\varphi$'s with positive coefficients, so it is
nondecreasing in each of them. Now, $\mathrm{chain}$ is the minimum of these
expressions over the finitely many $\pi$, and a pointwise minimum of
nondecreasing functions is nondecreasing.
\end{proof}

\subsection{The shape of a double-star plan}\label{sec:shape}

In Section~\ref{sec:enum-dstar} we described the four phases of a double-star
join order informally, with pictures. We now state them as a lemma, since the
plan space that our algorithm searches is exactly the one this lemma gives us.

\begin{lemma}[Shape]\label{lem:shape}
Let $T$ be a merge tree for $S_{a,b}$. Then:
\begin{enumerate}
\item[(i)] every cluster of $T$ with at least two vertices contains $H_1$ or
$H_2$;
\item[(ii)] there is a unique internal node $x^\ast$, the \emph{big merge},
whose cluster is the smallest one containing both hubs; its two children
clusters are $C_1\ni H_1$ and $C_2\ni H_2$, and exactly one of them contains
$S$;
\item[(iii)] $C_1\setminus\{H_1,S\}\subseteq\mcA$ and
$C_2\setminus\{H_2,S\}\subseteq\mcB$, and each $C_i$ is built by its hub
absorbing singletons one at a time;
\item[(iv)] above $x^\ast$, the remaining spokes are absorbed into the growing
cluster one at a time.
\end{enumerate}
\end{lemma}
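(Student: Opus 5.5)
The plan is to derive everything from two facts about $S_{a,b}$: the connectivity requirement in Definition~\ref{def:mergetree}, and the observation that the only edges are hub--spoke edges, $H_1S$, and $SH_2$. The graph is a tree, namely a path $H_1 - S - H_2$ with leaves hanging off the two hubs.

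\emph{Part (i).} Take any connected $U$ with $|U|\ge 2$. It spans at least one edge, and every edge has an endpoint in $\{H_1,H_2\}$ except possibly the edges at $S$. Those are $H_1S$ and $SH_2$, which also contain a hub. So some endpoint of that edge is a hub, and (i) follows at once.

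\emph{Part (ii).} The root cluster is $V$, which contains both hubs. The clusters of $T$ form a laminar family, and the clusters containing a fixed leaf form a chain along the root-to-leaf path. Hence the clusters containing both $H_1$ and $H_2$ are exactly the ancestors of the lowest common ancestor $x^\ast$ of the two hub leaves. This $x^\ast$ is unique and has the smallest such cluster. Its two children separate $H_1$ from $H_2$ by the definition of lowest common ancestor; call them $C_1\ni H_1$ and $C_2\ni H_2$. The cluster $C(x^\ast)$ is connected and contains both hubs. Every $H_1$--$H_2$ path in the tree passes through $S$, so $S\in C(x^\ast)$. The children partition $C(x^\ast)$, so $S$ lies in exactly one of them.

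\emph{Part (iii).} Consider $C_1$. It is connected, contains $H_1$, and does not contain $H_2$. Any vertex of $\mcB$ is adjacent only to $H_2$, so it would be isolated in $G[C_1]$. This gives $C_1\setminus\{H_1,S\}\subseteq\mcA$, and symmetrically for $C_2$.

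For the "one singleton at a time" claim, look at any internal node $y$ of the subtree at $C_1$, with children $y',y''$. By (i), each child with at least two vertices contains a hub, and the only hub available inside $C_1$ is $H_1$. So at most one child is non-singleton, and the other child is a single vertex. Walking down from $C_1$, the non-singleton side always contains $H_1$. Therefore $C_1$ is built by $H_1$'s cluster absorbing singletons, and likewise for $C_2$.

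\emph{Part (iv).} Repeat the argument above $x^\ast$. Let $y$ be an ancestor of $x^\ast$ with children $y'\ni H_1,H_2$ and $y''$ not containing $x^\ast$. The set $C(y'')$ contains no hub. If $|C(y'')|\ge 2$, it would violate (i). So $y''$ is a singleton spoke, and the remaining spokes are absorbed one at a time.

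The main obstacle is the bookkeeping in (iii), specifically whether $S$ can sit on the side without the "wrong" hub. If $S\in C_1$, it joins as a singleton absorbed by $H_1$ via the edge $H_1S$. This is consistent because it is adjacent to $H_1$, so no extra case arises. The only care needed is the laminar-family argument making "smallest cluster containing both hubs" well defined. I would state that argument explicitly as the uniqueness of the lowest common ancestor in a rooted tree.
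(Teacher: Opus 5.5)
Your proof is correct and follows the paper's argument part by part, including the key use of (i) to force the non-hub child at every merge in (iii) and (iv) to be a singleton. The only small variation is in (ii): you get $S\in C(x^\ast)$ from the connectivity of $C(x^\ast)$ and the fact that the unique $H_1$--$H_2$ path in the tree passes through $S$, whereas the paper uses the adjacency of $C_1$ and $C_2$ and identifies the crossing edge as $H_1S$ or $SH_2$, which amounts to the same observation.
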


\begin{proof}
Every vertex other than $H_1$ and $H_2$ is a spoke. A vertex of $\mcA$ has
$H_1$ as its only neighbour, a vertex of $\mcB$ has $H_2$, and $S$ has $H_1$
and $H_2$. So no two vertices of $V\setminus\{H_1,H_2\}$ are adjacent, and a
connected set with two or more vertices has to contain two adjacent vertices,
and hence a hub. This gives (i).

Now, let $x^\ast$ be the node closest to the leaves whose cluster contains
both hubs. Such a node exists, since the root cluster contains both hubs, and
it is unique, since the nodes whose clusters contain both hubs form a path up
to the root. Its children clusters $C_1$ and $C_2$ partition $C(x^\ast)$, and
they cannot both contain the same hub, so one of them contains $H_1$ and the
other contains $H_2$. By Definition~\ref{def:mergetree} they are adjacent.
Now, the only edges of $S_{a,b}$ joining a set containing $H_1$ to a set
containing $H_2$ are $H_1S$ and $SH_2$, so $S$ must lie in $C_1$ or in $C_2$;
the crossing edge is $SH_2$ when $S\in C_1$ and $H_1S$ when $S\in C_2$. This
gives (ii), and these two possibilities are exactly the two cases of
Proposition~\ref{prop:dstar-sum}.

For (iii), a vertex of $\mcB$ can lie in a connected set with $H_1$ only if
that set also contains $H_2$, which $C_1$ does not, so
$C_1\subseteq\{H_1,S\}\cup\mcA$. Now, look inside the subtree at $C_1$. Each
internal node there has one child containing $H_1$ and one child not
containing it, and by (i) the latter is a singleton. The same argument gives
$C_2$.

For (iv), every node above $x^\ast$ has one child whose cluster contains
$C(x^\ast)$ and hence both hubs, and by (i) its other child is a singleton.
\end{proof}

\begin{prop}[Cost decomposition]\label{prop:decomp}
Fix which hub takes $S$, say $S\in C_1$, together with the sets
$E_1=C_1\cap\mcA$ and $E_2=C_2\cap\mcB$ of spokes absorbed before the big
merge, and let $D=(\mcA\setminus E_1)\cup(\mcB\setminus E_2)$ be the deferred
spokes. The least cost of a merge tree with these data is
\begin{align}\label{eq:decomp}
\mathrm{chain}\bigl(w_{H_1};E_1\cup\{S\}\bigr)
+\mathrm{chain}\bigl(w_{H_2};E_2\bigr)
+ N\cdot W(D),
\end{align}
where, writing $\Lambda_1=\Lambda(w_{H_1};E_1\cup\{S\})$ and
$\Lambda_2=\Lambda(w_{H_2};E_2)$,
\begin{align*}
N=\Lambda_1\Lambda_2\,\sigma_{SH_2},
\qquad
W(D)=\sum_{k=0}^{|D|}\ \prod_{i\le k}\varphi_{(i)},
\end{align*}
the $\varphi_{(i)}$ being the fanouts of $D$ in ascending order.
\end{prop}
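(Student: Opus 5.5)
By Lemma~\ref{lem:shape}, once the hub that takes $S$, the set $E_1$ and the set $E_2$ are fixed, a merge tree with these data is determined by three independent choices. The first is the order in which $H_1$ absorbs the singletons of $E_1\cup\{S\}$ to build $C_1$. The second is the order in which $H_2$ absorbs the singletons of $E_2$ to build $C_2$. The third is the order in which the deferred spokes $D$ are absorbed one at a time above the big merge $x^\ast$. The internal nodes of $T$ therefore split into three disjoint groups: those inside the subtree of $C_1$, those inside the subtree of $C_2$, and those on the path from $x^\ast$ to the root, including $x^\ast$ itself. The cost $\mathrm{cost}(T)$ splits accordingly into three sums. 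The plan is to show that each sum depends only on its own choice, and that its minimum is the corresponding term of \eqref{eq:decomp}.

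For the first two groups I apply Lemma~\ref{lem:rank} directly. Inside $C_1$ the hub starts at cardinality $w_{H_1}$, and absorbing a spoke $u$ multiplies the cardinality by $\varphi_u$. For $S$ this factor is $\varphi_S^{(1)}=w_S\sigma_{H_1S}$, because the only edge added is $H_1S$; the edge $SH_2$ is not yet inside the cluster. So the minimum cost of this group is $\mathrm{chain}(w_{H_1};E_1\cup\{S\})$, and the resulting cardinality is $\Lambda_1$. The same argument gives $\mathrm{chain}(w_{H_2};E_2)$ and $\Lambda_2$ for $C_2$.

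For the third group I compute the cardinality at $x^\ast$ using Definition~\ref{def:weighted}. The vertex set is $C_1\cup C_2$. Its induced edges are those of $G[C_1]$, those of $G[C_2]$, and the single crossing edge $SH_2$; this is the crossing edge identified in the proof of Lemma~\ref{lem:shape}(ii). Hence $\card(C(x^\ast))=\Lambda_1\Lambda_2\sigma_{SH_2}=N$. Above $x^\ast$, each deferred spoke $u\in D$ is adjacent only to its own hub, and that hub is already in the cluster. So absorbing $u$ multiplies the cardinality by $\varphi_u$, where we use the $\varphi_u$ of the spoke's own hub. The costs of these nodes are therefore $N\prod_{i\le k}\varphi_{\pi(i)}$ for $k=0,\dots,|D|$, where the $k=0$ term is the node $x^\ast$. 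By Lemma~\ref{lem:rank} with starting cardinality $N$, together with the extra $k=0$ term $N$, the minimum of this sum is attained by ascending fanout order and equals $N\cdot W(D)$.

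The three choices are independent, and the three cost sums involve disjoint sets of nodes. The minimum total cost is therefore the sum of the three minima, which is exactly \eqref{eq:decomp}.

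I expect the main obstacle to be the independence claim. Two things need to be checked. First, the cost of the third group depends on the first two choices only through $\Lambda_1$ and $\Lambda_2$, and these do not depend on the absorption orders by Lemma~\ref{lem:rank}. Second, every triple of orders actually produces a valid merge tree, and each such triple corresponds to exactly one tree. Lemma~\ref{lem:shape} supplies the shape of every tree. Conversely, any chosen orders give connected clusters at every node, since each absorbed spoke is adjacent to a hub already present in the cluster.
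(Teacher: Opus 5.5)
Your proposal is correct and follows the paper's route. You partition the internal nodes via Lemma~\ref{lem:shape}, apply Lemma~\ref{lem:rank} to the two hub-side chains and to the chain above $x^\ast$, get $\card(C(x^\ast))=N$ from the single crossing edge $SH_2$, and minimize the groups separately. Your explicit independence check (the upper group depends on the lower ones only through the order-independent $\Lambda_1,\Lambda_2$, and every triple of orders gives a valid tree) spells out what the paper asserts in one line.
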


\begin{proof}
By Lemma~\ref{lem:shape} the internal nodes of $T$ fall into four groups:
those below $x^\ast$ on the $H_1$ side, those below it on the $H_2$ side,
$x^\ast$ itself, and those above it.

The first two groups are chains of singleton absorptions, so by
Lemma~\ref{lem:rank} their least total costs are the two $\mathrm{chain}$
terms, and their cardinalities are $\Lambda_1$ and $\Lambda_2$ whatever order
we use. Now, the cluster at $x^\ast$ is $C_1\cup C_2$, whose internal edges
are those of $C_1$, those of $C_2$, and the crossing edge $SH_2$; therefore
$\card(C(x^\ast))=\Lambda_1\Lambda_2\sigma_{SH_2}=N$. Above $x^\ast$ the spokes of
$D$ are absorbed one at a time, which contributes $N$ times the sum of the
prefix products of their fanouts, and by Lemma~\ref{lem:rank} this is least in
ascending order; adding $\card(C(x^\ast))=N$ itself supplies the $k=0$ term of
$W(D)$.

Finally, the three groups do not interact with one another, so we are allowed
to minimize them separately.
\end{proof}

\subsection{Which spokes to defer}\label{sec:defer}

Proposition~\ref{prop:decomp} still leaves us with a search over the pairs of
subsets $(E_1,E_2)$, and there are $2^{a+b}$ of those. The next two lemmas
bring that down to $(a+1)(b+1)$. The first one rewrites the last term of
\eqref{eq:decomp} in a form where we can see how it depends on $D$.

\begin{lemma}[Final-size identity]\label{lem:finalsize}
Let $\Pi_t(D)$ denote the product of the $t$ largest fanouts in $D$, with
$\Pi_0(D)=1$. Then
\begin{align*}
N\cdot W(D)\;=\;F\sum_{t=0}^{|D|}\frac{1}{\Pi_t(D)},
\end{align*}
where $F$ is the instance constant of Remark~\ref{rem:F}.
\end{lemma}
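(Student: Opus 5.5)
The plan is to show that $N\cdot\Pi(D)=F$, where $\Pi(D)=\prod_{u\in D}\varphi_u$ is the product of all fanouts in $D$. Once that holds, the identity is a reindexing of the prefix products.

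\emph{Step 1: $N\cdot\Pi(D)=F$.} We compute $\card(V)$ directly. The vertex set $V$ is the disjoint union of $C_1$, $C_2$ and $D$. The edges of $S_{a,b}$ fall into three groups: those inside $C_1$, those inside $C_2$, and the crossing edge $SH_2$. Every remaining edge joins a deferred spoke $u\in D$ to its hub $h$, and that hub already lies in $C_1\cup C_2$. Grouping the factors of $F=\prod_v w_v\prod_e\sigma_e$ accordingly gives
\begin{align*}
F=\card(C_1)\,\card(C_2)\,\sigma_{SH_2}\prod_{u\in D}w_u\sigma_{hu}.
\end{align*}
By Lemma~\ref{lem:rank} we have $\card(C_1)=\Lambda_1$ and $\card(C_2)=\Lambda_2$. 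By Definition~\ref{def:fanout} the last product is $\Pi(D)$. Hence $F=N\,\Pi(D)$. The same computation works when $S\in C_2$, with $\sigma_{H_1S}$ as the crossing selectivity; the statement is written for the case $S\in C_1$ fixed in Proposition~\ref{prop:decomp}.

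\emph{Step 2: complementary products.} Let $m=|D|$ and list the fanouts of $D$ in ascending order as $\varphi_{(1)}\le\cdots\le\varphi_{(m)}$. For each $0\le k\le m$, the prefix product $\prod_{i\le k}\varphi_{(i)}$ is the product of the $k$ smallest fanouts. It therefore equals $\Pi(D)$ divided by the product of the $m-k$ largest fanouts, that is, $\Pi(D)/\Pi_{m-k}(D)$. Ties among equal fanouts do not matter here, because the product of the $t$ largest values of a multiset is well defined.

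\emph{Step 3: reindex.} Substituting Step 2 into $W(D)$ gives
\begin{align*}
N\,W(D)=N\,\Pi(D)\sum_{k=0}^{m}\frac{1}{\Pi_{m-k}(D)}.
\end{align*}
Reindexing with $t=m-k$ and applying Step 1 yields $F\sum_{t=0}^{m}1/\Pi_t(D)$, as claimed.

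The only step that needs care is the edge bookkeeping in Step 1. Each edge of $S_{a,b}$ must be counted exactly once, and it must be checked that no two deferred spokes are adjacent, so that $D$ contributes only spoke-to-hub edges. Lemma~\ref{lem:shape}(i) guarantees this, and the remaining steps are purely formal.
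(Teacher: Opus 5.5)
Your proof is correct and follows the paper's argument. Both first establish $F=N\prod_{u\in D}\varphi_u$, then write the product of the $k$ smallest fanouts as $F/\Pi_{|D|-k}(D)$ and reindex with $t=|D|-k$. The only difference is in justifying $F=N\prod_{u\in D}\varphi_u$: you verify it by explicit edge bookkeeping, while the paper just notes that absorbing all of $D$ turns the cluster at the big merge into the root cluster, with each absorption multiplying the cardinality by that spoke's fanout.
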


\begin{proof}
Absorbing all of $D$ turns the cluster of cardinality $N$ into the root
cluster, so $F=N\prod_{u\in D}\varphi_u$. Now, list $D$ in ascending order.
The $k$-th term of $N\,W(D)$ is
\begin{align*}
N\prod_{i\le k}\varphi_{(i)}=\frac{F}{\prod_{i>k}\varphi_{(i)}}
=\frac{F}{\Pi_{|D|-k}(D)},
\end{align*}
since the fanouts after position $k$ are the $|D|-k$ largest ones.
Substituting $t=|D|-k$ gives the claim.
\end{proof}

\begin{lemma}[Interchange]\label{lem:interchange}
For each choice of which hub takes $S$, there is an optimal plan in which
$E_1$ consists of the $|E_1|$ smallest-fanout spokes of $\mcA$ and $E_2$ of
the $|E_2|$ smallest-fanout spokes of $\mcB$.
\end{lemma}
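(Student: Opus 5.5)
We argue by an exchange argument, using the cost decomposition of Proposition~\ref{prop:decomp} together with Lemma~\ref{lem:finalsize}. Fix which hub takes $S$, say $S\in C_1$ (the other case is symmetric). Fix an optimal plan with data $(E_1,E_2)$, and suppose $E_1$ is not a set of $|E_1|$ smallest-fanout spokes of $\mcA$. Then there exist $u\in E_1$ and $v\in\mcA\setminus E_1$ with $\varphi_v<\varphi_u$. Let $E_1'=(E_1\setminus\{u\})\cup\{v\}$, so that the deferred set becomes $D'=(D\setminus\{v\})\cup\{u\}$. We will show that the cost \eqref{eq:decomp} does not increase. Repeating the swap strictly decreases $\sum_{x\in E_1}\varphi_x$ over finitely many sets, so the process terminates at a set of the smallest $|E_1|$ fanouts (ties broken arbitrarily). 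The same argument then applies to $E_2$ within $\mcB$, and the two swaps do not interfere.

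We compare the three terms of \eqref{eq:decomp} one at a time.

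\textbf{The chain term.} Replacing $u$ by $v$ in $E_1\cup\{S\}$ lowers a single fanout in the multiset. By Lemma~\ref{lem:mono}, $\mathrm{chain}(w_{H_1};E_1'\cup\{S\})\le\mathrm{chain}(w_{H_1};E_1\cup\{S\})$. The $H_2$ chain term is unchanged.

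\textbf{The tail term.} By Lemma~\ref{lem:finalsize}, this term equals $F\sum_{t}1/\Pi_t(D)$, and $F$ does not depend on the plan (Remark~\ref{rem:F}). So it suffices to show $\Pi_t(D')\ge\Pi_t(D)$ for every $t$. Since $D'$ is obtained from $D$ by raising one element from $\varphi_v$ to $\varphi_u$, the sorted list of $D'$ dominates the sorted list of $D$ entrywise. Hence every top-$t$ product weakly increases, because all fanouts are positive.

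\textbf{Main obstacle.} The key step is the tail term. One might worry that raising $N$ could hurt, since $N$ depends on $\Lambda_1$, and here $\Lambda_1$ decreases. The final-size identity removes this worry entirely, because it expresses the tail purely through $F$ and $D$. The remaining work is the clean majorization fact that increasing one entry of a positive multiset cannot decrease the product of its $t$ largest entries. To check it, note that the $t$ largest entries of $D'$ include either $u$ in place of $v$ or the original top $t$ of $D$, and in both cases the product is at least $\Pi_t(D)$.

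Combining the two comparisons, the swapped plan costs no more than the original. Iterating the swaps therefore yields an optimal plan of the claimed form.
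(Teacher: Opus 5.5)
Your proposal is correct and follows the paper's proof essentially step for step. You use the same exchange of $u\in E_1$ with a smaller-fanout $v\in\mcA\setminus E_1$, Lemma~\ref{lem:mono} for the chain term, and Lemma~\ref{lem:finalsize} together with entrywise domination of the sorted fanouts of $D'$ over those of $D$ for the tail term. The differences are cosmetic: you terminate via $\sum_{x\in E_1}\varphi_x$ rather than the sum of ranks. You also have a small slip in wording: the swap actually \emph{lowers} $N$ while raising $W(D)$, and that opposite movement is exactly the tension the final-size identity resolves.
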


\begin{proof}
Say that $u\in E_1$ and $v\in\mcA\setminus E_1$ with $\varphi_v<\varphi_u$,
and exchange them: let $E_1'=(E_1\setminus\{u\})\cup\{v\}$ and
$D'=(D\setminus\{v\})\cup\{u\}$. We compare the three terms of
\eqref{eq:decomp}, using the form that Lemma~\ref{lem:finalsize} gives us for
the third one.

The second term does not change at all. The first term is
$\mathrm{chain}(w_{H_1};E_1\cup\{S\})$, and passing from $E_1$ to $E_1'$
replaces the fanout $\varphi_u$ by the smaller $\varphi_v$ and leaves
everything else alone, so by Lemma~\ref{lem:mono} it does not increase. Now,
for the third term, $D'$ comes from $D$ by replacing $v$ with the larger $u$.
If we sort both multisets in descending order, every order statistic of $D'$
is at least the corresponding one of $D$, so $\Pi_t(D')\ge\Pi_t(D)$ for every
$t$, and therefore $F\sum_t 1/\Pi_t(D')\le F\sum_t 1/\Pi_t(D)$. Since $F$ is
the same for every plan, the total does not increase.

Repeating the exchange terminates, since each one strictly decreases the sum
of the ranks of the spokes in $E_1$, and when no exchange is left to make,
$E_1$ is a set of smallest-fanout spokes. The argument for $E_2$ is the same,
with the first and second terms of \eqref{eq:decomp} swapped.
\end{proof}

\begin{cor}\label{cor:planspace}
An optimal merge tree for $S_{a,b}$ is attained among $2(a+1)(b+1)$ plans,
indexed by which hub takes $S$ together with the number $p$ of $\mcA$-spokes
and the number $q$ of $\mcB$-spokes absorbed before the big merge.
\end{cor}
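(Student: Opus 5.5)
The argument assembles the preceding results in order: first reduce an arbitrary merge tree to a small amount of discrete data, then minimize over each piece separately. By Lemma~\ref{lem:shape}, every merge tree $T$ for $S_{a,b}$ determines three things: which hub's cluster $C_i$ contains $S$ at the big merge $x^\ast$ (two choices), the set $E_1=C_1\cap\mcA$, and the set $E_2=C_2\cap\mcB$. Once these are fixed, Proposition~\ref{prop:decomp} gives the least cost over all trees with these data in closed form~\eqref{eq:decomp}, and that minimum is attained by ordering each of the three singleton-absorption chains in ascending fanout. So the optimum over $\mcT(S_{a,b})$ is the minimum of~\eqref{eq:decomp} over the $2\cdot 2^{a+b}$ triples (side of $S$, $E_1$, $E_2$).

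Next, I would invoke Lemma~\ref{lem:interchange} separately for each choice of side. It says the minimum of~\eqref{eq:decomp} over $(E_1,E_2)$ is attained when $E_1$ is an initial segment of $\mcA$ sorted by ascending fanout and $E_2$ is an initial segment of $\mcB$. Such a pair is determined by $p=|E_1|\in\{0,\dots,a\}$ and $q=|E_2|\in\{0,\dots,b\}$, giving $(a+1)(b+1)$ candidates per side. Combining the two sides and taking the smaller minimum yields $2(a+1)(b+1)$ candidate plans. Each candidate is a genuine merge tree: the two hub chains and the deferred chain above $x^\ast$, each in ascending order.

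The main point to verify is ties in fanout. With ties, ``the $p$ smallest-fanout spokes'' is ambiguous. I would fix a total order on $\mcA$ and on $\mcB$, sorting by fanout and breaking ties by index, so that the exchange argument of Lemma~\ref{lem:interchange} terminates at exactly the initial segment of length $p$. Tied spokes give equal values of~\eqref{eq:decomp}, which holds by symmetry since the formula depends only on the multiset of fanouts. Hence the indexing by $(p,q)$ is well defined, and the corollary follows.
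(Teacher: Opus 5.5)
Your proposal is correct and follows the paper's own route: Lemma~\ref{lem:shape} reduces a plan to the data (side of $S$, $E_1$, $E_2$), Proposition~\ref{prop:decomp} minimizes the cost for fixed data, and Lemma~\ref{lem:interchange} collapses the $2^{a+b}$ choices of $(E_1,E_2)$ per side to the $(a+1)(b+1)$ prefix pairs indexed by $(p,q)$. Your treatment of tied fanouts is a valid small refinement that the paper glosses over, and it works because \eqref{eq:decomp} depends only on the multisets of fanouts in $E_1\cup\{S\}$, $E_2$ and $D$.
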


\subsection{The algorithm}\label{sec:dstar-alg}

By Corollary~\ref{cor:planspace}, all our algorithm has to do is evaluate
\eqref{eq:decomp} for each $(p,q)$. If we do this directly then each
evaluation costs $O\bigl((a+b)\log(a+b)\bigr)$, since we have to pool the
deferred spokes of the two sides and sort them. The next lemma lets us avoid
that, by computing all $(a+1)(b+1)$ values of $W$ at once.

\begin{lemma}[Leftover recurrence]\label{lem:Wrec}
Sort $\mcA$ and $\mcB$ ascending by fanout, as
$\varphi^{\mcA}_1\le\cdots\le\varphi^{\mcA}_a$ and
$\varphi^{\mcB}_1\le\cdots\le\varphi^{\mcB}_b$, and write $W[p][q]$ for
$W(D)$ with $D$ the last $a-p$ spokes of $\mcA$ together with the last $b-q$
of $\mcB$. Then $W[a][b]=1$ and, otherwise,
\begin{align*}
W[p][q]=
\begin{cases}
1+\varphi^{\mcA}_{p+1}\,W[p+1][q], & q=b,\ \text{or } p<a \text{ and } \varphi^{\mcA}_{p+1}\le\varphi^{\mcB}_{q+1},\\[2pt]
1+\varphi^{\mcB}_{q+1}\,W[p][q+1], & \text{otherwise.}
\end{cases}
\end{align*}
\end{lemma}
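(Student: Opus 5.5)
The plan is to prove the recurrence by unwinding the definition of $W(D)$ as a Horner-type sum. Write $D=D[p][q]$ for the multiset consisting of the last $a-p$ spokes of $\mcA$ and the last $b-q$ spokes of $\mcB$. Let $\varphi_{(1)}\le\cdots\le\varphi_{(m)}$ be its fanouts in ascending order, with $m=|D|$. By the definition in Proposition~\ref{prop:decomp},
\begin{align*}
W(D)=\sum_{k=0}^{m}\prod_{i\le k}\varphi_{(i)} = 1+\varphi_{(1)}\sum_{k=1}^{m}\prod_{2\le i\le k}\varphi_{(i)} = 1+\varphi_{(1)}\,W\bigl(D\setminus\{u_{(1)}\}\bigr),
\end{align*}
where $u_{(1)}$ is a spoke attaining the minimum fanout. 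The point is that removing the smallest element of a sorted list leaves the remaining elements in ascending order. For the base case, $D[a][b]=\emptyset$ and the sum consists only of the empty product, so $W[a][b]=1$.

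The remaining step is to identify the minimum of $D[p][q]$ and check that removing it gives $D[p+1][q]$ or $D[p][q+1]$. Since $\mcA$ and $\mcB$ are each sorted ascending, the smallest fanout among the last $a-p$ spokes of $\mcA$ is $\varphi^{\mcA}_{p+1}$ (when $p<a$). Likewise, the smallest among the last $b-q$ spokes of $\mcB$ is $\varphi^{\mcB}_{q+1}$ (when $q<b$). The minimum of $D[p][q]$ is therefore the smaller of these two. There are three cases.
\begin{itemize}
\item If $q=b$, only $\mcA$ contributes, so the minimum is $\varphi^{\mcA}_{p+1}$. Here $p<a$ because we are not at $(a,b)$, and removing that spoke leaves $D[p+1][b]$.
\item If $p<a$, $q<b$ and $\varphi^{\mcA}_{p+1}\le\varphi^{\mcB}_{q+1}$, the $\mcA$-spoke is a minimum, and removing it leaves $D[p+1][q]$.
\item Otherwise, either $p=a$ (so only $\mcB$ remains), or the $\mcB$-spoke is strictly smaller. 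In both situations removing $\varphi^{\mcB}_{q+1}$ leaves $D[p][q+1]$.
\end{itemize}

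The only subtle point is ties, both within $D$ and between the two sides. The Horner identity holds for any choice of minimal element, because $W$ depends only on the sorted sequence of fanouts and not on which equal-valued spoke is called ``first.'' So breaking ties in favor of $\mcA$ is harmless. I would note this explicitly and also confirm that the recursion is well-founded, since $(a-p)+(b-q)$ strictly decreases at each step. This finishes the argument and gives all $(a+1)(b+1)$ values of $W$ in $O(ab)$ time after sorting.
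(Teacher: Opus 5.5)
Your proposal is correct and follows the paper's own argument. You identify the smallest element of $D$ as the smaller of $\varphi^{\mcA}_{p+1}$ and $\varphi^{\mcB}_{q+1}$, note that deleting it advances the corresponding index, and factor it out of every nonempty prefix product to get $W(D)=1+\varphi_{(1)}W(D\setminus\{u_{(1)}\})$. Your explicit treatment of the boundary cases $p=a$ and $q=b$, and of ties, is more careful than the paper's two-line proof, but it is the same route.
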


\begin{proof}
$D$ is the union of a suffix of $\mcA$ and a suffix of $\mcB$, so its smallest
fanout is $\min(\varphi^{\mcA}_{p+1},\varphi^{\mcB}_{q+1})$, and deleting that
element leaves a set of the same form, with the corresponding index advanced
by one. Now, write $D=\{d_1\le\cdots\le d_m\}$ and factor $\varphi_{d_1}$ out
of every term of $W(D)$ except the $k=0$ one; this gives
$W(D)=1+\varphi_{d_1}W(D\setminus\{d_1\})$.
\end{proof}

Algorithms~\ref{alg:prefix} through \ref{alg:main} put the pieces together.
Algorithm~\ref{alg:prefix} tabulates the cost and the cardinality of the
cluster that a hub has built after absorbing its $p$ cheapest spokes. We need
two versions of this: one without the shared spoke, for the hub on the far
side of the big merge, and one with it, for the hub that takes $S$. In the
second version the shared spoke is present in every prefix, but it sits at its
own rank position, which is why the table splits at the index $j$ where
$\varphi_S$ falls.

\begin{algorithm}[htbp]
\caption{Prefix tables for one hub}\label{alg:prefix}
\begin{algorithmic}[1]
\Procedure{PrefixTables}{$w$, $\varphi_1\le\cdots\le\varphi_k$}
  \State $N[0]\gets w$; \ $C[0]\gets 0$
  \For{$p=1$ \textbf{to} $k$}
    \State $N[p]\gets N[p-1]\cdot\varphi_p$; \ $C[p]\gets C[p-1]+N[p]$
  \EndFor
  \State \Return $C,N$ \Comment{cost and cardinality after the $p$ cheapest spokes}
\EndProcedure
\Statex
\Procedure{AnchorTables}{$w$, $\varphi_1\le\cdots\le\varphi_k$, $\varphi_S$}
  \State $C,N\gets\Call{PrefixTables}{w,\varphi_1,\dots,\varphi_k}$
  \State $j\gets |\{\,i:\varphi_i<\varphi_S\,\}|$ \Comment{rank position of the shared spoke}
  \For{$p=0$ \textbf{to} $k$}
    \State $\mathit{size}[p]\gets N[p]\cdot\varphi_S$
    \If{$p\le j$}
      \State $\mathit{cost}[p]\gets C[p]+N[p]\cdot\varphi_S$
    \Else
      \State $\mathit{cost}[p]\gets C[j]+\varphi_S\cdot\bigl(N[j]+C[p]-C[j]\bigr)$
    \EndIf
  \EndFor
  \State \Return $\mathit{cost},\mathit{size}$
\EndProcedure
\end{algorithmic}
\end{algorithm}

\begin{algorithm}[htbp]
\caption{Pooled leftovers, all $(p,q)$ at once}\label{alg:leftover}
\begin{algorithmic}[1]
\Procedure{LeftoverTable}{$\varphi^{\mcA}_1\le\cdots\le\varphi^{\mcA}_a$, $\varphi^{\mcB}_1\le\cdots\le\varphi^{\mcB}_b$}
  \State $W[a][b]\gets 1$
  \For{$p=a$ \textbf{down to} $0$}
    \For{$q=b$ \textbf{down to} $0$}
      \State \textbf{if} $p=a$ \textbf{and} $q=b$ \textbf{then continue}
      \If{$q=b$ \textbf{or} ($p<a$ \textbf{and} $\varphi^{\mcA}_{p+1}\le\varphi^{\mcB}_{q+1}$)}
        \State $W[p][q]\gets 1+\varphi^{\mcA}_{p+1}\cdot W[p+1][q]$
      \Else
        \State $W[p][q]\gets 1+\varphi^{\mcB}_{q+1}\cdot W[p][q+1]$
      \EndIf
    \EndFor
  \EndFor
  \State \Return $W$
\EndProcedure
\end{algorithmic}
\end{algorithm}

\begin{algorithm}[htbp]
\caption{Optimal join order for a double star}\label{alg:main}
\begin{algorithmic}[1]
\Procedure{SolveOneSide}{$w_h$, $w_o$, $\varphi_S$, $\sigma$, $\mcH$, $\mcO$}
  \Comment{$\mcH$ is the hub that takes $S$; $\sigma$ is the far selectivity}
  \State sort $\mcH$ and $\mcO$ ascending by fanout; let $a=|\mcH|$, $b=|\mcO|$
  \State $\mathit{cost}_L,\mathit{size}_L\gets\Call{AnchorTables}{w_h,\ \varphi(\mcH),\ \varphi_S}$
  \State $\mathit{cost}_R,\mathit{size}_R\gets\Call{PrefixTables}{w_o,\ \varphi(\mcO)}$
  \State $W\gets\Call{LeftoverTable}{\varphi(\mcH),\varphi(\mcO)}$
  \State $\mathit{best}\gets\infty$
  \For{$p=0$ \textbf{to} $a$}
    \For{$q=0$ \textbf{to} $b$}
      \State $N\gets \mathit{size}_L[p]\cdot \mathit{size}_R[q]\cdot\sigma$
        \Comment{cardinality at the big merge}
      \State $t\gets \mathit{cost}_L[p]+\mathit{cost}_R[q]+N\cdot W[p][q]$
      \State \textbf{if} $t<\mathit{best}$ \textbf{then} $(\mathit{best},p^\ast,q^\ast)\gets(t,p,q)$
    \EndFor
  \EndFor
  \State \Return $\mathit{best},p^\ast,q^\ast$
\EndProcedure
\Statex
\Procedure{OptimalDoubleStar}{$w_{H_1},w_{H_2},w_S,\sigma_{H_1S},\sigma_{SH_2},\mcA,\mcB$}
  \State $r_1\gets\Call{SolveOneSide}{w_{H_1},w_{H_2},\,w_S\sigma_{H_1S},\,\sigma_{SH_2},\,\mcA,\mcB}$
  \State $r_2\gets\Call{SolveOneSide}{w_{H_2},w_{H_1},\,w_S\sigma_{SH_2},\,\sigma_{H_1S},\,\mcB,\mcA}$
  \State \Return the cheaper of $r_1$ and $r_2$, tagged with which hub took $S$
\EndProcedure
\end{algorithmic}
\end{algorithm}

\begin{thm}\label{thm:dstar-alg}
\textup{\textsc{OptimalDoubleStar}} returns
$\min_{T\in\mcT(S_{a,b})}\mathrm{cost}(T)$. It runs in time
$O\bigl(a\log a+b\log b+ab\bigr)$ and uses $O(ab)$ space.
\end{thm}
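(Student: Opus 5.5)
Correctness and complexity are handled separately. For correctness, the plan is to use Corollary~\ref{cor:planspace}. It says some optimal merge tree is indexed by a choice of hub for $S$ and a pair $(p,q)$, where $E_1$ and $E_2$ are the $p$ and $q$ cheapest spokes on the two sides. Proposition~\ref{prop:decomp} gives the least cost for each such index as \eqref{eq:decomp}. It therefore suffices to show that, for every $(p,q)$, the value $t$ computed inside \textsc{SolveOneSide} equals \eqref{eq:decomp}. The outer call \textsc{OptimalDoubleStar} then takes the minimum over both hub choices. The second call is the first with the roles of $H_1,\mcA$ and $H_2,\mcB$ swapped, so one verification covers both.

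The equality for $t$ is checked term by term. The term $\mathit{cost}_R[q]$ from \textsc{PrefixTables} is $\sum_{k=1}^q w_o\prod_{i\le k}\varphi_i$, which is $\mathrm{chain}(w_o;E_2)$ by Lemma~\ref{lem:rank}; likewise $\mathit{size}_R[q]=\Lambda_2$. For \textsc{AnchorTables}, the ascending order of $E_1\cup\{S\}$ places $S$ after the first $\min(p,j)$ spokes. If $p\le j$, the prefix costs are $C[p]$ followed by one final term $N[p]\varphi_S$. If $p>j$, the first $j$ terms give $C[j]$, then comes $N[j]\varphi_S$, and every later term is the corresponding $N[k]$ multiplied by $\varphi_S$, which contributes $\varphi_S(C[p]-C[j])$. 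Ties between $\varphi_S$ and some $\varphi_i$ do not matter, because Lemma~\ref{lem:rank} allows any ascending order. The size is $N[p]\varphi_S=\Lambda_1$, so the cardinality at the big merge is $\Lambda_1\Lambda_2\sigma$, matching Proposition~\ref{prop:decomp}. Finally, $W[p][q]=W(D)$ by Lemma~\ref{lem:Wrec}. One must also check that the double loop of Algorithm~\ref{alg:leftover} fills $W[p+1][q]$ and $W[p][q+1]$ before $W[p][q]$ reads them. This holds because both indices decrease, and the boundary case $p=a$ forces the $\mcB$ branch.

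For complexity, the two sorts cost $O(a\log a+b\log b)$. The prefix and anchor tables take $O(a+b)$ time, with $j$ found by a linear scan. The leftover table and the $(p,q)$ double loop each take $O(ab)$ time with $O(1)$ work per cell, and \textsc{SolveOneSide} is called twice. The space is dominated by $W$, which is $O(ab)$; all other tables are linear.

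The main obstacle is the split-index bookkeeping in \textsc{AnchorTables}. There I must verify that inserting $S$ at its rank position rescales exactly the suffix of prefix products, and must state the convention that $j$ counts spokes with strictly smaller fanout. Everything else follows directly from the cited lemmas.
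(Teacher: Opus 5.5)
Your proposal is correct and follows the same route as the paper. Correctness comes from Proposition~\ref{prop:decomp} and Corollary~\ref{cor:planspace} (via Lemma~\ref{lem:interchange}), with the loop body matched to \eqref{eq:decomp} through Lemma~\ref{lem:rank} and Lemma~\ref{lem:Wrec}, and the complexity from the same count of sorts, linear-time tables, and constant-time table cells; your explicit checks of the two \textsc{AnchorTables} branches, the \textsc{LeftoverTable} fill order, and the role-swapped second call only make explicit what the paper's short proof leaves implicit.
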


\begin{proof}
Correctness follows from Proposition~\ref{prop:decomp} together with
Corollary~\ref{cor:planspace}. The two calls to \textsc{SolveOneSide} cover
the two choices of which hub takes $S$, and inside each call the loop
evaluates \eqref{eq:decomp} for every $(p,q)$; by Lemma~\ref{lem:interchange}
this includes an optimal plan. The loop body evaluates \eqref{eq:decomp}
correctly by Lemma~\ref{lem:rank} for $\mathit{cost}_L$ and
$\mathit{cost}_R$, and by Lemma~\ref{lem:Wrec} for $W$.

Now, for the running time. The two sorts cost $O(a\log a+b\log b)$.
\textsc{PrefixTables} and \textsc{AnchorTables} are both $O(a+b)$; for the
second one this is because we can find the split index $j$ in a single pass
over an already sorted list. \textsc{LeftoverTable} fills $(a+1)(b+1)$ entries
at $O(1)$ each, and the main loop does $O(1)$ work per pair. The only thing we
store that is asymptotically large is $W$, and we can bring that down to
$O(\min(a,b))$ by generating it one row at a time as the loop runs.
\end{proof}

\begin{rem}[Recovering the plan]\label{rem:recover}
Note that the triple (which hub took $S$, $p^\ast$, $q^\ast$) determines an
optimal merge tree, and not just its cost. By Lemma~\ref{lem:interchange} the
hub that takes $S$ absorbs the $p^\ast$ cheapest spokes on its own side; by
Lemma~\ref{lem:rank} it absorbs them together with $S$ in ascending fanout
order; the far hub does the same with its $q^\ast$ cheapest spokes; and after
the big merge, the two leftover piles are pooled and absorbed in ascending
fanout order.
\end{rem}

\begin{ex}\label{ex:dstar}
Take $w_{H_1}=10$, $w_{H_2}=50$, $w_S=10$, $\sigma_{H_1S}=0.5$ and
$\sigma_{SH_2}=0.2$, with
\begin{align*}
\mcA:\quad &\varphi_{a_1}=0.1,\quad \varphi_{a_2}=0.4,\quad \varphi_{a_3}=2.5,\\
\mcB:\quad &\varphi_{b_1}=0.2,\quad \varphi_{b_2}=2.0,
\end{align*}
so that $\varphi^{(1)}_S=5$ and $\varphi^{(2)}_S=2$, and $F=20$. The two calls
to \textsc{SolveOneSide} produce the tables
\[
\begin{array}{c|ccc}
\multicolumn{4}{c}{S \text{ taken by } H_1}\\
p\backslash q & 0 & 1 & 2\\\hline
0 & 642.0 & 202.0 & 328.0\\
1 & 98.0  & 58.0  & 84.0\\
2 & 55.4  & \mathbf{45.4} & 61.4\\
3 & 87.4  & 47.4  & 57.4
\end{array}
\qquad
\begin{array}{c|cccc}
\multicolumn{5}{c}{S \text{ taken by } H_2}\\
p\backslash q & 0 & 1 & 2 & 3\\\hline
0 & 692.0 & 193.0 & 153.4 & 182.4\\
1 & 172.0 & 73.0  & 63.4  & \mathbf{62.4}\\
2 & 318.0 & 119.0 & 99.4  & 92.4
\end{array}
\]
in which $p$ always counts the spokes of the hub that takes $S$. The optimum
is $45.4$, and it occurs when $H_1$ takes $S$ with $p^\ast=2$ and $q^\ast=1$.
By Remark~\ref{rem:recover} the plan is
\begin{align*}
\Bigl(\bigl(((H_1\bowtie a_1)\bowtie a_2)\bowtie S\bigr)\bowtie (H_2\bowtie b_1)\Bigr)
\bowtie b_2\ \bowtie\ a_3,
\end{align*}
whose intermediate cardinalities are $1,\,0.4,\,2,\,10,\,4,\,8,\,20$. Note
that the optimum defers $a_3$ and $b_2$, which are the two expanding spokes.
Note also that the two orientations differ by more than a quarter of the total
cost, so we can drop neither the search over deferrals nor the second call to
\textsc{SolveOneSide}.
\end{ex}

\subsection{The helix}\label{sec:helix-alg}
 
For the double star we were able to cut the plan space down to $2(a+1)(b+1)$
candidates, because Lemma~\ref{lem:shape} told us exactly what a plan has to
look like. For the helix we have no such lemma. Theorem~\ref{thm:hn} tells us
how many merge trees $\mcH_n$ has, but it says nothing about which of them is
the cheapest, and the counting argument behind it does not survive the
introduction of weights: two merge trees counted by the same $E_n(d)$ can have
very different costs. So instead we go back to Lemma~\ref{lem:split} and turn
it into a dynamic program. Similar observations were made by~\cite{gpu_joe}.
 
The reason this works is that the cost of a merge tree splits at its root.
 
\begin{lemma}[Bellman equation]\label{lem:bellman}
For a connected $U\subseteq V$, write
$\mu(U)=\min\{\mathrm{cost}(T):T\in\mcT(G[U])\}$. Then $\mu(U)=0$ when
$|U|=1$, and otherwise
\begin{align*}
\mu(U)=\card(U)+\min_{\substack{\emptyset\ne L\subsetneq U\\ L,\,U\setminus L \text{ connected}}}
\bigl(\mu(L)+\mu(U\setminus L)\bigr).
\end{align*}
\end{lemma}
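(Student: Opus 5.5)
The plan is to prove the Bellman equation by looking at the root of a merge tree, exactly as in Lemma~\ref{lem:split}, and then checking that the cost is additive across the root split. The case $|U|=1$ is immediate: the only merge tree of $G[U]$ is a single leaf, which has no internal nodes, so its cost is the empty sum, $0$.

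For $|U|\ge 2$, the first step is a bijection. Let $T\in\mcT(G[U])$ with root $r$ and children $y,z$. Set $L=C(y)$; then $C(z)=U\setminus L$. Both sets are nonempty and proper. Both are connected by Definition~\ref{def:mergetree}. The subtrees $T_y$ and $T_z$ are merge trees for $G[L]$ and $G[U\setminus L]$, since every cluster in them is a cluster of $T$ and so is connected. Conversely, take any admissible $L$ together with merge trees $T_L\in\mcT(G[L])$ and $T_R\in\mcT(G[U\setminus L])$. Joining them under a new root gives a merge tree for $G[U]$: the root cluster $U=C(r)$ is connected because $U$ is assumed connected (equivalently, by Remark~\ref{rem:model}(ii), because $L$ and $U\setminus L$ partition a connected set and are therefore adjacent).

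The second step is the cost identity. The internal nodes of $T$ are $r$ together with the internal nodes of $T_y$ and $T_z$, and clusters are unchanged when we pass to a subtree. Definition~\ref{def:cout} therefore gives
\begin{align*}
\mathrm{cost}(T)=\card(U)+\mathrm{cost}(T_y)+\mathrm{cost}(T_z).
\end{align*}
One point needs a remark here. Each $\card(C(x))$ depends only on the vertex set $C(x)$ and the induced edges, as in Definition~\ref{def:weighted}, so the cost of $T_y$ computed in $G[L]$ agrees with its contribution computed in $G[U]$. This is the independence assumption from Remark~\ref{rem:cout}.

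The last step is to minimize. The term $\card(U)$ is the same for every tree. For a fixed split $L$, the two subtrees can be chosen independently, so the minimum of the remaining terms is $\mu(L)+\mu(U\setminus L)$. Minimizing over $L$ gives the stated formula: the $\ge$ direction comes from decomposing an optimal $T$, and the $\le$ direction comes from gluing optimal subtrees. The unordered-children issue, the factor $\tfrac12$ in Lemma~\ref{lem:split}, does not matter for a minimum, since $L$ and $U\setminus L$ both appear in the range. I expect no real obstacle; the only care needed is the connectivity check in the gluing direction and noting that $\card$ is intrinsic to the vertex set.
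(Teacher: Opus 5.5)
Your proposal is correct and follows essentially the same route as the paper. You decompose at the root into a connected split carrying a merge tree on each side (and glue in the converse direction), use that $\card(C(x))$ depends only on the vertex set to get $\mathrm{cost}(T)=\card(U)+\mathrm{cost}(T_L)+\mathrm{cost}(T_R)$, and then minimize the two subtrees independently before minimizing over the split; your explicit handling of $|U|=1$, the unordered children, and the agreement of induced edges between $G[L]$ and $G[U]$ only spells out details the paper leaves implicit.
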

 
\begin{proof}
Say $|U|\ge 2$ and let $T$ be a merge tree for $G[U]$. Its root has two
children, whose clusters $L$ and $U\setminus L$ are connected and partition
$U$, and the two subtrees hanging off the root are merge trees for $G[L]$ and
$G[U\setminus L]$. Conversely, any such split together with a merge tree for
each part gives a merge tree for $G[U]$. Now, by Remark~\ref{rem:cout} the
quantity $\card(C(x))$ depends only on the cluster at $x$ and not on how that
cluster was built, so
\begin{align*}
\mathrm{cost}(T)=\card(U)+\mathrm{cost}(T_L)+\mathrm{cost}(T_R),
\end{align*}
where $T_L$ and $T_R$ are the two subtrees. Therefore minimizing over $T$ is
the same as minimizing the two subtrees separately and then minimizing over
the split, which is what the displayed equation says.
\end{proof}
 
Note that this is the same recursion as Lemma~\ref{lem:split}, over the same
splits; we have simply replaced the sum of products by a minimum of sums.
There are two small things to check before we can compute with it.
 
\begin{lemma}\label{lem:splitexists}
Let $U$ be connected with $|U|\ge 2$ and let $z\in U$. Then $U$ has at least
one split into two connected parts, and we may take $z$ to lie in whichever
part we like.
\end{lemma}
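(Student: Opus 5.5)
The plan is to use a spanning tree and peel off a leaf. Since $G[U]$ is connected, it has a spanning tree $T_U$, and because $|U|\ge 2$ this tree has at least two leaves. Every tree with at least two vertices has at least two leaves, so we can choose a leaf $\ell$ of $T_U$ with $\ell\ne z$. The candidate split is $L=\{\ell\}$ and $U\setminus L$.

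First we check that both parts are connected. The singleton $\{\ell\}$ is trivially connected. The set $U\setminus\{\ell\}$ is connected because removing a leaf from a tree leaves a tree. Thus $T_U-\ell$ is a spanning tree of $G[U\setminus\{\ell\}]$, and that induced subgraph contains every edge of $T_U-\ell$. Both parts are nonempty, since $|U|\ge2$. This gives a split with $z\in U\setminus L$.

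Next we handle the placement of $z$. The splits in Lemma~\ref{lem:bellman} are unordered in the sense that $L$ and $U\setminus L$ play symmetric roles. So if we want $z$ in the part called $L$, we simply relabel and take $L=U\setminus\{\ell\}$, whose complement is $\{\ell\}$. In either labelling, $z$ lies in the designated part.

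The only point needing care is the existence of a leaf other than $z$. I would justify it by the standard fact that a finite tree on at least two vertices has at least two leaves: the endpoints of a longest path are both leaves. Everything else is routine.
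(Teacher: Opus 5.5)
Your proof is correct and follows the paper's argument essentially verbatim: take a spanning tree of $G[U]$, peel off a leaf $\ell\ne z$ to get the split $\{\ell\}$ against $U\setminus\{\ell\}$, and swap the names of the two parts to put $z$ on whichever side you want. Your extra justification that a tree on at least two vertices has two leaves (the endpoints of a longest path) supplies the one fact the paper takes for granted.
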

 
\begin{proof}
Take a spanning tree of $G[U]$. It has at least two leaves, so we may pick a
leaf $v$ with $v\ne z$. Deleting a leaf from a tree leaves a tree, so
$U\setminus\{v\}$ is connected, and $\{v\}$ is connected as well; this is a
split, and $z$ lies in $U\setminus\{v\}$. Swapping the names of the two parts
gives the other case.
\end{proof}
 
So the minimum in Lemma~\ref{lem:bellman} is never over an empty set. Now, the
two parts of a split are unordered, so if we enumerate all pairs
$(L,U\setminus L)$ we look at each split twice. We avoid this exactly the way
Lemma~\ref{lem:split} does. Fix any total order on the vertices, let $z$ be
the smallest vertex of $U$, and only consider those $L$ that contain $z$;
every split has $z$ in exactly one of its two parts, so this enumerates each
split once. By Lemma~\ref{lem:splitexists} we still see at least one.
 
\begin{rem}\label{rem:crossredundant}
It is natural to also test that $L$ and $U\setminus L$ are adjacent, so that
the merge at the root is not a cross product. This test is redundant here.
Both parts are connected and they partition the connected set $U$, so by
Remark~\ref{rem:model}(ii) some edge of $G$ has to cross between them.
\end{rem}
 
Algorithm~\ref{alg:helix} is the resulting dynamic program. We fill a table
$\mu$ over the connected subsets, working upwards by size, so that whenever we
come to $U$ both parts of every split of $U$ are already in the table, and
alongside it a table $\mathit{split}$ recording which split achieved each
minimum. Note that $\mu$ also does the connectivity test for us: its keys are
exactly the connected sets we have already processed, so a part is connected
if and only if we can look it up.
 
\begin{algorithm}[htbp]
\caption{Optimal join order for a helix}\label{alg:helix}
\begin{algorithmic}[1]
\Procedure{HelixMinCost}{$V$, $\mathrm{adj}$, weights $w$, selectivities $\sigma$}
  \State $\mathcal{C}\gets$ the connected subsets of $V$, sorted by size
  \For{$U\in\mathcal{C}$}
    \State $\mathit{card}[U]\gets\prod_{v\in U}w_v\prod_{e\in E(G[U])}\sigma_e$
      \Comment{the estimated cardinality $\card(U)$}
  \EndFor
  \For{$U\in\mathcal{C}$, in nondecreasing size}
    \If{$|U|=1$}
      \State $\mu[U]\gets 0$; \textbf{continue}
    \EndIf
    \State $\mathit{best}\gets\infty$; \ $z\gets\min U$
      \Comment{the anchor; any fixed total order will do}
    \For{each proper nonempty $L\subset U$ with $z\in L$}
      \State $R\gets U\setminus L$
        \If{$L,R$ are both keys of $\mu$ \textbf{and} $\mu[L]+\mu[R]<\mathit{best}$}
            \Comment{both connected}
            \State $\mathit{best}\gets\mu[L]+\mu[R]$; \ $\mathit{split}[U]\gets(L,R)$
        \EndIf
    \EndFor
    \State $\mu[U]\gets\mathit{best}+\mathit{card}[U]$
  \EndFor
  \State \Return $\mu[V]$ \textbf{and} $\mathit{split}$
\EndProcedure
\end{algorithmic}
\end{algorithm}
 
The number of table entries is the number of connected subsets of the helix,
which we can count exactly.
 
\begin{prop}\label{prop:connsubsets}
For $n\ge 1$, the number of nonempty connected subsets of $\mcH_n$ is
\begin{align*}
3^{\,n+2}-6n-8.
\end{align*}
\end{prop}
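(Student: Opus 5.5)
We count the nonempty connected subsets $U$ of $\mcH_n$ by classifying them according to which pinchpoints they contain. Label the pinchpoints $P_0,\dots,P_n$ and write $A_k,B_k$ for the two middle vertices of the $k$-th diamond, so that $A_k$ and $B_k$ are each adjacent exactly to $P_{k-1}$ and $P_k$. The plan is to prove a structural characterization of connected subsets, turn it into a double sum of products of small factors, and evaluate that sum with geometric series.

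\emph{Step 1: the characterization.} If $U$ contains no pinchpoint, then $U$ consists only of middle vertices. No two of these are adjacent, so $U$ must be a single $A_k$ or $B_k$, which gives $2n$ sets. Otherwise, let $P_i$ and $P_j$ ($i\le j$) be the pinchpoints of $U$ with smallest and largest index. Removing $P_k$ disconnects $\mcH_n$ for $0<k<n$. Hence every pinchpoint strictly between $P_i$ and $P_j$ lies in $U$, and the pinchpoints of $U$ form the interval $P_i,\dots,P_j$. For each interior diamond $k$ with $i<k\le j$, the vertices $A_k,B_k$ are the only route from $P_{k-1}$ to $P_k$, so $U$ must contain at least one of them. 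For the two flanking diamonds $i$ (if $i>0$) and $j+1$ (if $j<n$), each middle vertex is attached to $U$ through $P_i$ or $P_j$, so any subset of $\{A,B\}$ is allowed. No middle vertex of any other diamond can lie in $U$, since its neighbours are absent. Conversely, every set built this way is connected. This gives
\[
3^{\,j-i}\cdot 4^{[i>0]}\cdot 4^{[j<n]}
\]
connected sets for each interval $[i,j]$. As a sanity check, for $n=1$ the intervals contribute $4+4+3=11$, and adding the $2$ singletons gives $13=3^3-6-8$.

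\emph{Step 2: the summation.} We sum the products over $0\le i\le j\le n$, grouping by the length $m=j-i$. The interval touching both ends ($m=n$) contributes $3^n$. For each $m<n$, there are two boundary intervals, those with $i=0$ or $j=n$, each contributing $4\cdot 3^m$. The remaining $n-m-1$ intervals contribute $16\cdot 3^m$ each. The total is therefore
\[
2n+3^n+\sum_{m=0}^{n-1}\bigl(8+16(n-m-1)\bigr)3^m .
\]
We evaluate this with $\sum_{m<n} 3^m=(3^n-1)/2$ and the standard formula for $\sum_{m<n} m\,3^m$. This should collapse to $3^{n+2}-6n-8$. As a cross-check, we can verify $n=1$ and $n=2$; for $n=2$ the total is $4+11+4\cdot 4+\dots$, which should equal $81-20=61$. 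Alternatively, we can verify the recurrence $c_n-3c_{n-1}=12n+10$ that the closed form satisfies.

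\emph{Main obstacle.} The step that needs the most care is Step 1: it must be justified rigorously that the classification is exhaustive and non-overlapping. The points to nail down are the cut-vertex argument for contiguity and the different treatment of interior and flanking diamonds, together with the edge cases $i=0$ and $j=n$, where there is no flanking diamond. The algebra in Step 2 is routine. If it resists simplification, we will instead prove the formula by induction on $n$, counting the sets that contain at least one vertex of the new diamond.
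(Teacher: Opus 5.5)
Your proposal is correct and follows the paper's proof essentially step for step: the same interval-of-pinch-points characterization, with a factor $3$ per interior diamond and $4$ per existing flanking diamond. Grouping by $m=j-i$ just merges the paper's two one-sided boundary cases into your $8\cdot 3^m$ term, and $2n+3^n+4(3^n-1)+4(3^n-2n-1)=3^{n+2}-6n-8$ completes the algebra you left as ``should collapse.'' One slip in your optional cross-check: the closed form satisfies $c_n-3c_{n-1}=12n-2$ (equivalently $c_{n+1}-3c_n=12n+10$), not $c_n-3c_{n-1}=12n+10$.
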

 
\begin{proof}
Let $U$ be such a subset. We split into two cases according to whether $U$
contains a pinch point.
 
Suppose first that it does not, so that $U$ consists only of vertices $A_i$
and $B_i$. No two of these are adjacent, since the neighbours of $A_i$ and of
$B_i$ are all pinch points, so $U$ has to be a single vertex. This gives $2n$
subsets.
 
Now suppose $U$ does contain a pinch point, and let $s$ and $t$ be the
smallest and largest indices with $P_s,P_t\in U$. Every $P_i$ with $s<i<t$ is
a cut vertex separating $P_s$ from $P_t$, so all of them lie in $U$ as well,
and the pinch points of $U$ are exactly $P_s,\dots,P_t$. For each $i$ with
$s<i\le t$ the set $U$ must contain $A_i$ or $B_i$, since otherwise $P_{i-1}$
and $P_i$ would be in different components; and any of the three nonempty
choices keeps $U$ connected. That is $3^{\,t-s}$ possibilities. Outside this
range there are at most two more diamonds to consider, namely the one below
$P_s$ and the one above $P_t$. Their middle vertices are adjacent to $P_s$
and to $P_t$ respectively, so including any of them keeps $U$ connected and
excluding them all does too; each of these two diamonds therefore contributes
a free factor of $4$ when it exists, that is, when $s\ge 1$ and when $t\le
n-1$. Altogether the count is
\begin{align*}
\sum_{0\le s\le t\le n} 3^{\,t-s}\cdot\bigl[s\ge 1\bigr]_4\cdot\bigl[t\le n-1\bigr]_4,
\end{align*}
where $[\,\cdot\,]_4$ denotes $4$ when the condition holds and $1$ when it
does not. Splitting this sum into its four cases,
\begin{align*}
s=0,\ t=n:&\quad 3^{\,n},\\
s=0,\ t<n:&\quad 4\sum_{t=0}^{n-1}3^{\,t}=2\left(3^{\,n}-1\right),\\
s>0,\ t=n:&\quad 4\sum_{s=1}^{n}3^{\,n-s}=2\left(3^{\,n}-1\right),\\
s>0,\ t<n:&\quad 16\sum_{d=0}^{n-2}(n-1-d)\,3^{\,d}=4\left(3^{\,n}-2n-1\right),
\end{align*}
and adding these gives $9\cdot 3^{\,n}-8n-8$. Adding the $2n$ subsets from the
first case gives $3^{\,n+2}-6n-8$.
\end{proof}
 
\begin{thm}\label{thm:helix-alg}
\textup{\textsc{HelixMinCost}} returns $\min_{T\in\mcT(G)}\mathrm{cost}(T)$
for any weighted join graph $G$. On $\mcH_n$ it fills $3^{\,n+2}-6n-8$ table
entries, and the split enumeration examines
\begin{align*}
\sum_{\substack{U \text{ connected}\\ |U|\ge 2}}\bigl(2^{\,|U|-1}-1\bigr)
\;=\;\Theta\bigl(16^{\,n}\bigr)
\end{align*}
candidate pairs, so the running time is $\Theta(16^{\,n})$ up to polynomial
factors. Since $\mcH_n$ has $3n+1$ vertices, this is about $2.52$ to the power
of the number of relations.
\end{thm}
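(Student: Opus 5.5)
The proof has three parts: correctness, the table size, and the split count. Correctness of \textsc{HelixMinCost} for an arbitrary weighted join graph goes by strong induction on $|U|$, following the order in which the table is filled. Singletons get $\mu[U]=0$, which is correct. For $|U|\ge 2$, each candidate $L$ is tested by looking up $L$ and $R=U\setminus L$ in $\mu$. Since $\mu$ has been filled on exactly the connected sets of smaller size, the test succeeds precisely when both parts are connected, so the loop ranges over exactly the splits in Lemma~\ref{lem:bellman}. Each split is seen once because the anchor $z$ lies in $L$. By Lemma~\ref{lem:splitexists} at least one split exists, so $\mathit{best}$ is finite. Remark~\ref{rem:crossredundant} shows no adjacency test is needed. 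The inductive hypothesis gives correct values of $\mu[L]$ and $\mu[R]$, and Lemma~\ref{lem:bellman} then gives $\mu[U]$ correctly; at $U=V$ this is the optimum.

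The table-entry count is exactly Proposition~\ref{prop:connsubsets}. For the split count, I fix a connected $U$ with $|U|\ge 2$. The loop runs over the subsets $L$ with $z\in L\subsetneq U$, and there are $2^{|U|-1}-1$ of these. This gives the displayed sum. It remains to show the sum is $\Theta(16^n)$. The $-1$ terms add up to at most $3^{n+2}$, so it suffices to estimate $\Sigma=\sum_{U}2^{|U|}$ over connected $U$.

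This estimate is the main step, and I would do it by reusing the decomposition in the proof of Proposition~\ref{prop:connsubsets}, with each subset now weighted by $2^{|U|}$ instead of $1$. Subsets without a pinch point contribute $2n\cdot 2$. Now fix a range $P_s,\dots,P_t$.
\begin{itemize}
\item The pinch point $P_s$ contributes a factor $2$.
\item Each diamond $i$ with $s<i\le t$ contributes $2$ for $P_i$ times $(2+2+4)=8$ for the nonempty choices among $A_i,B_i$, so $16$ in total.
\item Each of the at most two flanking diamonds contributes $(1+2)^2=9$.
\end{itemize}
Hence
\begin{align*}
\Sigma = 4n+\sum_{0\le s\le t\le n} 2\cdot 16^{\,t-s}\cdot c_s\, c'_t,
\qquad c_s,c'_t\in\{1,9\}.
\end{align*}
The term $s=0$, $t=n$ alone equals $2\cdot 16^n$, which gives the lower bound. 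For the upper bound, the remaining terms satisfy
\begin{align*}
\sum_{s\le t} 16^{\,t-s}\le\sum_{k=0}^{n}(n+1-k)\,16^{k}=O(16^n),
\end{align*}
since the series $\sum_{j\ge 0}(j+1)16^{-j}$ converges. Together these give $\Theta(16^n)$.

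For the running time, each candidate pair costs polynomial time to handle: forming $R$ and doing hash lookups on bitsets. Precomputing $\mathit{card}$ and enumerating the connected subsets costs $O(3^{n}\cdot\mathrm{poly}(n))$. So the total is $\Theta(16^n)$ up to polynomial factors. Finally, with $N=3n+1$ relations, $16^n=16^{(N-1)/3}$, and $16^{1/3}\approx 2.52$, which is the stated rate per relation.
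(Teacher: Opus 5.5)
Your proof is correct and follows the paper's argument: correctness comes from Lemma~\ref{lem:bellman} together with the anchor argument and Lemma~\ref{lem:splitexists}, the table count from Proposition~\ref{prop:connsubsets}, and the growth rate from summing $2^{|U|}$ over the connected subsets. The paper states the exact value $(128\cdot 16^{n}-170n-78)/25$ without deriving it; carrying your $2^{|U|}$-weighted version of the $(s,t)$ decomposition through exactly recovers that closed form, whereas you stop at the $\Theta(16^n)$ bounds, which are all the theorem needs.
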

 
\begin{proof}
Correctness is Lemma~\ref{lem:bellman}. The table is filled in nondecreasing
order of size, so when we reach $U$ every connected proper subset of $U$
already has its entry, and the inner loop therefore ranges over exactly the
splits appearing in the lemma, each one once by the anchor argument above; by
Lemma~\ref{lem:splitexists} at least one such split exists, so
$\mathit{best}$ is finite. The count of table entries is
Proposition~\ref{prop:connsubsets}. For the enumeration, a connected set $U$
has $2^{\,|U|-1}-1$ proper nonempty subsets containing its anchor, and
summing $2^{\,|U|}$ over the connected subsets of $\mcH_n$ gives
$\bigl(128\cdot 16^{\,n}-170n-78\bigr)/25$, from which the stated order of
growth follows.
\end{proof}

\begin{rem}[Recovering the plan]\label{rem:helix-recover}
As with the double star we want the plan itself and not only its cost, which is
why Algorithm~\ref{alg:helix} records the split that achieved each minimum. The
merge tree is then read off by recursion: the root splits $V$ into
$\mathit{split}[V]=(L,R)$, the subtree over $L$ is built the same way from
$\mathit{split}[L]$, likewise for $R$, and the singletons are the leaves. This
consults one table entry per internal node of the tree, so it costs $3n$
lookups on top of the search and nothing in the way of extra asymptotics.
\end{rem}
 
\begin{rem}[What the algorithm costs against what it saves]\label{rem:helixgap}
The dynamic program is enormously better than working through the merge trees
one at a time, but it is still exponential, and as written it looks at many
more candidate splits than there are real ones:
\[
\begin{array}{c|r|r|r|r}
n & f(\mcH_n) & \text{table entries} & \text{splits} & \text{pairs examined}\\\hline
1 & 10 & 13 & 18 & 23\\
2 & 856 & 61 & 204 & 586\\
3 & 131{,}072 & 217 & 1{,}254 & 10{,}257\\
4 & 28{,}260{,}704 & 697 & 5{,}976 & 167{,}060\\
5 & 7{,}790{,}029{,}440 & 2{,}149 & 24{,}954 & 2{,}682{,}187\\
6 & 2{,}607{,}019{,}229{,}184 & 6{,}517 & 96{,}420 & 42{,}943{,}134
\end{array}
\]
The last two columns differ because we enumerate all subsets of $U$ that
contain the anchor and then discard the ones whose two parts are not both
connected; enumerating only the splits themselves would bring the running time
down to the fourth column. Note what the helix is doing for us in this table.
It is not making the algorithm any faster. It is making the algorithm's cost
computable: Proposition~\ref{prop:connsubsets} and Theorem~\ref{thm:helix-alg}
are exact statements about a dynamic program that knows nothing about helices,
and for a general family we could only have given bounds. We note that our observations mirror those in~\cite{dpcpp}, where the authors make a similar observation about testing candidate splits and enumerating valid connected splits.
\end{rem}
 
\begin{ex}\label{ex:helix}
Take $\mcH_1$, whose vertices are $P_0$, $P_1$, $A_1$ and $B_1$, with
\begin{align*}
w_{P_0}=5,\quad w_{A_1}=4,\quad w_{B_1}=20,\quad w_{P_1}=20,
\end{align*}
and
\begin{align*}
\sigma_{P_0A_1}=2,\quad \sigma_{P_0B_1}=0.05,\quad
\sigma_{A_1P_1}=0.05,\quad \sigma_{B_1P_1}=0.5.
\end{align*}
By Proposition~\ref{prop:connsubsets} there are $27-6-8=13$ connected subsets,
namely the four singletons, the four edges, the four triples and the whole
graph. The table the algorithm fills is
\[
\begin{array}{l|rr}
U & \card(U) & \mu(U)\\\hline
\{A_1,P_1\}          & 4   & 4\\
\{B_1,P_0\}          & 5   & 5\\
\{A_1,P_0\}          & 40  & 40\\
\{B_1,P_1\}          & 200 & 200\\
\{A_1,B_1,P_1\}      & 40  & 44\\
\{A_1,P_0,P_1\}      & 40  & 44\\
\{A_1,B_1,P_0\}      & 40  & 45\\
\{B_1,P_0,P_1\}      & 50  & 55\\
\{A_1,B_1,P_0,P_1\}  & 20  & 29
\end{array}
\]
so the optimal cost is $29$. Note that the split achieving it is
$\{A_1,P_1\}$ against $\{B_1,P_0\}$, giving the plan
\begin{align*}
(A_1\bowtie P_1)\bowtie(B_1\bowtie P_0)
\end{align*}
with intermediate cardinalities $4$, $5$ and $20$. This plan is genuinely
bushy - and the bushiness here is neccesary; every split of the whole graph that puts a single vertex on one side
costs at least $44+20=64$ here, so a left-deep plan is more than twice as
expensive as the optimum. This is a real difference from the double star,
where Lemma~\ref{lem:shape} tells us that no plan is ever bushy in this way.
\end{ex}
 
\begin{rem}[Where the structure went]\label{rem:helix-general}
It is worth being exact about what this subsection has and has not done.
Algorithm~\ref{alg:helix} uses no property of the helix whatsoever; it is the
generic connected-subset dynamic program and it runs on any join graph, the
double star included. The structure of Theorem~\ref{thm:K} went
into the analysis, but it did not go into the algorithm. 

Section~\ref{sec:dstar-alg} is the other way round. There Lemma~\ref{lem:shape}
changes what the algorithm does --- it is why we search $2(a+1)(b+1)$
candidates rather than run a dynamic program over subsets --- and the payoff is
a polynomial running time. So we have one positive case and one negative case,
and what separates them is exactly whether the structure reached the algorithm
or stopped at the analysis. Whether the structure behind Theorem~\ref{thm:K}
can be made to do algorithmic work as well remains open.
\end{rem}

\section{Benchmarking Against The General Algorithm}\label{sec:bench}

We now measure the two algorithms of Section~\ref{sec:algorithms} against
DPhyp~\cite{dphyp}, the general-purpose join order optimizer that a
specialized algorithm has to beat if it is to be worth having. DPhyp is exact,
so whenever our algorithm and DPhyp report the same plan cost we know that our
algorithm has found an optimal plan.
 
For each family we generate instances of increasing size and plan each one
$100$ times, recording the time spent choosing the join order and the cost of
the plan chosen. For the double star we take every size from $5$ to $21$
relations; for the helix we take $\mcH_1$ through $\mcH_8$, that is $4$, $7$,
\ldots, $25$ relations. We report medians; the interquartile range never
exceeds $14\%$ of the median at any point, and is under $5\%$ at every point
past the smallest few, where the measurement is dominated by fixed overhead.
 
The first thing to check is that we are comparing like with like.
 
\begin{rem}\label{rem:costsagree}
At every size in both families, our algorithm and DPhyp report exactly the same
plan cost. Since DPhyp is exact, this says that
\textsc{OptimalDoubleStar} and \textsc{HelixMinCost} return optimal plans on
every instance we measured, which is what Theorems~\ref{thm:dstar-alg}
and~\ref{thm:helix-alg} predict.
\end{rem}
 
\subsection{The double star}\label{sec:bench-dstar}
 
Figure~\ref{fig:bench-dstar} plots the two arms against each other. The
vertical axis is logarithmic, so a straight line means exponential growth.
 
\begin{figure}[htbp]
\centering
\begin{tikzpicture}
\begin{semilogyaxis}[
  width=0.74\linewidth, height=6.2cm,
  xlabel={number of relations}, ylabel={join order time (ms)},
  xmin=4.4, xmax=22.4, ymin=0.05, ymax=20000,
  xtick={5,7,9,11,13,15,17,19,21},
  ytick={0.1,1,10,100,1000,10000},
  yticklabels={$0.1$,$1$,$10$,$100$,$1000$,$10\,000$},
  grid=major, grid style={gray!25, very thin},
  tick align=outside, tick style={gray!50},
  axis line style={gray!60},
  label style={font=\small}, tick label style={font=\footnotesize},
  clip=false,
]
\addplot[black, thick, mark=square, mark size=1.7pt, dashed] coordinates {
 (5,0.1175) (6,0.1840) (7,0.2630) (8,0.4305) (9,0.7460) (10,1.4150)
 (11,2.8060) (12,5.5060) (13,11.1800) (14,22.9885) (15,47.9655)
 (16,101.4025) (17,216.8495) (18,467.8225) (19,1017.4840) (20,2219.1215)
 (21,5017.1640)};
\node[anchor=west, font=\footnotesize] at (axis cs:21.25,5017) {DPhyp};
\addplot[black, thick, mark=*, mark size=1.5pt] coordinates {
 (5,0.0970) (6,0.1150) (7,0.1410) (8,0.1485) (9,0.1670) (10,0.1900)
 (11,0.2090) (12,0.2350) (13,0.2560) (14,0.2830) (15,0.3095) (16,0.3395)
 (17,0.3610) (18,0.3970) (19,0.4260) (20,0.4620) (21,0.5000)};
\node[anchor=west, font=\footnotesize] at (axis cs:21.25,0.50) {Algorithm~\ref{alg:main}};
\end{semilogyaxis}
\end{tikzpicture}
\caption{Median join order time on double stars, over $100$ runs per point.
Solid with round markers is Algorithm~\ref{alg:main}; dashed with square
markers is DPhyp. The vertical axis is logarithmic.}
\label{fig:bench-dstar}
\end{figure}
 
The two curves say exactly what Theorem~\ref{thm:dstar-alg} and
Corollary~\ref{cor:planspace} lead us to expect. DPhyp is a straight line on
the log axis: each extra relation multiplies its running time by about
$1.95$, so it is doubling roughly every relation we add. Our algorithm is
flat by comparison. In fact over this range its median time is very nearly
linear in the number of relations,
\begin{align*}
0.0247\,t-0.048 \text{ ms},\qquad R^2=0.986,
\end{align*}
which sits comfortably inside the $O(a\log a+b\log b+ab)$ bound; the quadratic
term has not had a chance to show itself at these sizes. At $5$ relations the
two are within a factor of $1.2$ of each other, which is the fixed overhead of
planning at all, but the gap opens quickly: by $13$ relations it is $44$-fold,
and at $21$ relations it is $5017$~ms against $0.50$~ms, a factor of about
$10^4$.
 
\subsection{The helix}\label{sec:bench-helix}
 
The helix tells a more interesting story, shown in
Figure~\ref{fig:bench-helix}.
 
\begin{figure}[htbp]
\centering
\begin{tikzpicture}
\begin{semilogyaxis}[
  width=0.74\linewidth, height=6.2cm,
  xlabel={number of relations}, ylabel={join order time (ms)},
  xmin=3.2, xmax=26.8, ymin=0.05, ymax=60000,
  xtick={4,7,10,13,16,19,22,25},
  ytick={0.1,1,10,100,1000,10000},
  yticklabels={$0.1$,$1$,$10$,$100$,$1000$,$10\,000$},
  grid=major, grid style={gray!25, very thin},
  tick align=outside, tick style={gray!50},
  axis line style={gray!60},
  label style={font=\small}, tick label style={font=\footnotesize},
  clip=false,
]
\addplot[black, thick, mark=square, mark size=1.7pt, dashed] coordinates {
 (4,0.1080) (7,0.4855) (10,2.3620) (13,10.6255) (16,45.8065)
 (19,187.8850) (22,747.5160) (25,2987.7450)};
\node[anchor=west, font=\footnotesize] at (axis cs:25.3,2988) {DPhyp};
\addplot[black, thick, mark=*, mark size=1.5pt] coordinates {
 (4,0.0850) (7,0.1540) (10,0.2820) (13,0.7970) (16,5.2490)
 (19,63.6035) (22,934.5890) (25,14477.4775)};
\node[anchor=west, font=\footnotesize] at (axis cs:25.3,14477) {Algorithm~\ref{alg:helix}};
\draw[gray!70, thin, dotted] (axis cs:20.6,0.05) -- (axis cs:20.6,60000);
\node[anchor=south, font=\footnotesize, gray!70!black]
  at (axis cs:20.6,0.06) {crossover};
\end{semilogyaxis}
\end{tikzpicture}
\caption{Median join order time on helices $\mcH_1$ through $\mcH_8$, over
$100$ runs per point. Solid with round markers is Algorithm~\ref{alg:helix};
dashed with square markers is DPhyp. The two curves cross between $19$ and
$22$ relations.}
\label{fig:bench-helix}
\end{figure}
 
Up to $\mcH_4$ the dynamic program wins comfortably, peaking at about
$13$-fold at $13$ relations. After that its advantage erodes, and somewhere
between $19$ and $22$ relations DPhyp overtakes it; by $\mcH_8$ our algorithm
is nearly five times \emph{slower}. This is not a surprise once we look at how
fast each curve is climbing. Adding one diamond multiplies the measured time
of Algorithm~\ref{alg:helix} by
\begin{align*}
1.81,\quad 1.83,\quad 2.83,\quad 6.59,\quad 12.12,\quad 14.69,\quad 15.49,
\end{align*}
which is settling on $16$, exactly the constant that Theorem~\ref{thm:helix-alg}
predicts. The same figures for DPhyp are
\begin{align*}
4.50,\quad 4.87,\quad 4.50,\quad 4.31,\quad 4.10,\quad 3.98,\quad 4.00,
\end{align*}
so DPhyp is growing roughly fourfold per diamond. Both are exponential, but
one has four times the base of the other, and no constant factor is going to
rescue that.
 
\begin{rem}\label{rem:whyhelixloses}
Note that this is a race between two general-purpose algorithms. Neither
Algorithm~\ref{alg:helix} nor DPhyp knows anything about helices, and what
separates them is only how they enumerate: DPhyp goes through the connected
splits directly, whereas Algorithm~\ref{alg:helix} goes through every subset of
$U$ containing the anchor and discards the ones whose two parts are not both
connected. What the helix contributes here is the constants. The table in
Remark~\ref{rem:helixgap} counts both quantities exactly, and over that range
the splits grow by a factor of roughly $4$ per diamond while the pairs examined
grow by a factor of $16$; those are the two slopes in
Figure~\ref{fig:bench-helix}, so the crossover was predictable before we ran
anything. Replacing the subset enumeration by a direct enumeration of the
connected splits would put Algorithm~\ref{alg:helix} on DPhyp's growth rate,
and the measured constants suggest it would then stay ahead. We have not done
this. It is a limitation of the enumeration strategy rather than of
Lemma~\ref{lem:bellman}, and it is the obvious next step.
\end{rem}
 
\subsection{Speedup}\label{sec:bench-speedup}
 
Figure~\ref{fig:bench-speedup} puts both families on one axis, plotting how
many times faster our algorithm is than DPhyp. Above the line at $1$ we win;
below it we lose.
 
\begin{figure}[htbp]
\centering
\begin{tikzpicture}
\begin{semilogyaxis}[
  width=0.78\linewidth, height=5.8cm,
  xlabel={number of relations}, ylabel={speedup over DPhyp ($\times$)},
  xmin=3.4, xmax=26, ymin=0.12, ymax=30000,
  xtick={5,9,13,17,21,25},
  ytick={0.1,1,10,100,1000,10000},
  yticklabels={$0.1$,$1$,$10$,$100$,$1000$,$10\,000$},
  grid=major, grid style={gray!25, very thin},
  tick align=outside, tick style={gray!50},
  axis line style={gray!60},
  label style={font=\small}, tick label style={font=\footnotesize},
  legend style={at={(0.03,0.97)}, anchor=north west, draw=gray!50,
                font=\footnotesize, cells={anchor=west}, row sep=1pt},
]
\addplot[black, thick, mark=*, mark size=1.5pt] coordinates {
 (5,1.21) (6,1.60) (7,1.87) (8,2.90) (9,4.47) (10,7.45) (11,13.43)
 (12,23.43) (13,43.67) (14,81.23) (15,154.98) (16,298.68) (17,600.69)
 (18,1178.39) (19,2388.46) (20,4803.29) (21,10034.33)};
\addlegendentry{double star}
\addplot[black, thick, mark=triangle, mark size=2.2pt, dash pattern=on 4pt off 2pt] coordinates {
 (4,1.27) (7,3.15) (10,8.38) (13,13.33) (16,8.73) (19,2.95) (22,0.80) (25,0.21)};
\addlegendentry{helix}
\addplot[gray!70, thin, dotted, forget plot] coordinates {(3.4,1) (26,1)};
\end{semilogyaxis}
\end{tikzpicture}
\caption{Median speedup over DPhyp, by family. The dotted line marks parity.
The double-star curve is Algorithm~\ref{alg:main}; the helix curve is
Algorithm~\ref{alg:helix}.}
\label{fig:bench-speedup}
\end{figure}
 
The two curves are the summary of the paper. The double-star algorithm gains
about a factor of two on DPhyp with every relation we add and shows no sign of
stopping, because it is polynomial and DPhyp is not; this is the positive case,
and it is positive because Lemma~\ref{lem:shape} changed what the algorithm
does. The helix curve rises, turns over at $\mcH_4$ and comes back down through
parity; this is the negative case, and it is negative because there the
structure went into the analysis instead, as
Remark~\ref{rem:helix-general} says. Both halves are worth having. Structure
that only sharpens the analysis of a generic method tells us precisely how far
that method will get (here, that it is overtaken between $19$ and $22$
relations), but it does not move the frontier. Structure that changes what
the algorithm does is what moves it.
 
\begin{table}[htbp]
\centering
\small
\begin{tabular}{r|rrr|r|rrr}
\multicolumn{4}{c|}{double star} & & \multicolumn{3}{c}{helix}\\
$t$ & Alg.~\ref{alg:main} & DPhyp & speedup & $t$ & Alg.~\ref{alg:helix} & DPhyp & speedup\\\hline
 5 & 0.097 &    0.118 &     1.2 &  4 &     0.085 &    0.108 & 1.3\\
 7 & 0.141 &    0.263 &     1.9 &  7 &     0.154 &    0.486 & 3.2\\
 9 & 0.167 &    0.746 &     4.5 & 10 &     0.282 &    2.362 & 8.4\\
11 & 0.209 &    2.806 &    13.4 & 13 &     0.797 &   10.626 & 13.3\\
13 & 0.256 &   11.180 &    43.7 & 16 &     5.249 &   45.807 & 8.7\\
15 & 0.310 &   47.966 &   155.0 & 19 &    63.604 &  187.885 & 3.0\\
17 & 0.361 &  216.850 &   600.7 & 22 &   934.589 &  747.516 & 0.8\\
19 & 0.426 & 1017.484 &  2388.5 & 25 & 14477.478 & 2987.745 & 0.2\\
21 & 0.500 & 5017.164 & 10034.3 &    &           &          &
\end{tabular}
\caption{Median join order time in milliseconds over $100$ runs, by number of
relations $t$. The double-star columns show every other size.}
\label{tab:bench}
\end{table}

\section{Related work}

Join order selection is one of the oldest problems in data management~\cite{systemr}. This is attributable to the fact that the number of possible join orders follow the Catalan numbers, growing rapidly with the number of tables involved~\cite{joe_complexity}. Most systems today combine a cost model, a cardinality estimator, and a plan enumerator to pick join orders~\cite{howgood}, with this work focusing on plan enumeration. The complexity of the enumeration process also depends on the assumptions one can make about the cost model. Assuming only that the cost model maintains optimal substructure leads to $O(3^n)$ algorithms for general purpose graphs~\cite{dphyp}. With some additional modest assumptions, this can be pushed to $O(2^n n^3)$~\cite{dpconv}. 

Making assumptions about the join graph, or the shape of the optimal result, can significantly cut the required time~\cite{joe_complexity}. Searching for only left-deep trees on acyclic graphs with a simplified cost model has led to systems with $O(n^2)$ search times, although this excludes a large family of both queries and plans~\cite{star_schema_jo}.

\section{Conclusion}\label{sec:conclusion}
There is still more to be discovered; in particular, the helix algorithm can potentially be improved. As previously mentioned, the helix algorithm is only better in theory than the general algorithm because it has less to test, and it does not take advantage of the repetitive structure of the helix. The exploration of a better algorithm for the helix is left to the reader.

\bibliographystyle{amsplain}
\bibliography{ryan-cites-long}

\providecommand{\bysame}{\leavevmode\hbox to3em{\hrulefill}\thinspace}
\providecommand{\MR}{\relax\ifhmode\unskip\space\fi MR }
\providecommand{\MRhref}[2]{\href{http://www.ams.org/mathscinet-getitem?mr=#1}{#2}
}
\providecommand{\href}[2]{#2}
\begin{thebibliography}{10}

\bibitem{path_joe}
Altan Birler, Mihail Stoian, and Thomas Neumann, \emph{Optimizing {Linearized}
  {Join} {Enumeration} by {Adapting} to the {Query} {Structure}}, Lecture Notes
  in Informatics (LNI) (2025) (en).

\bibitem{cross_products_joe}
Sophie Cluet and Guido Moerkotte, \emph{On the complexity of generating optimal
  left-deep processing trees with cross products}, Database {Theory} — {ICDT}
  '95 (Berlin, Heidelberg) (Georg Gottlob and Moshe~Y. Vardi, eds.), {ICDT}
  '95, Springer, 1995, pp.~54--67 (en).

\bibitem{tree_jo_theory}
Toshihide Ibaraki and Tiko Kameda, \emph{On the optimal nesting order for
  computing {N}-relational joins}, ACM Transactions on Database Systems
  \textbf{9} (1984), no.~3, 482--502 (en).

\bibitem{star_schema_jo}
Ravi Krishnamurthy, Haran Boral, and Carlo Zaniolo, \emph{Optimization of
  {Nonrecursive} {Queries}}, PVLDB (1986).

\bibitem{howgood}
Viktor Leis, Andrey Gubichev, Atanas Mirchev, Peter Boncz, Alfons Kemper, and
  Thomas Neumann, \emph{How {Good} {Are} {Query} {Optimizers}, {Really}?},
  PVLDB \textbf{9} (2015), no.~3, 204--215.

\bibitem{gpu_joe}
Riccardo Mancini, Srinivas Karthik, Bikash Chandra, Vasilis Mageirakos, and
  Anastasia Ailamaki, \emph{Efficient {Massively} {Parallel} {Join}
  {Optimization} for {Large} {Queries}}, Proceedings of the 2022
  {International} {Conference} on {Management} of {Data} (New York, NY, USA),
  {SIGMOD} '22, Association for Computing Machinery, June 2022, pp.~122--135.

\bibitem{dpcpp}
Guido Moerkotte and Thomas Neumann, \emph{Analysis of two existing and one new
  dynamic programming algorithm for the generation of optimal bushy join trees
  without cross products}, Proceedings of the 32nd international conference on
  {Very} large data bases (Seoul, Korea), {VLDB} '06, VLDB Endowment, September
  2006, pp.~930--941.

\bibitem{dphyp}
\bysame, \emph{Dynamic programming strikes back}, Proceedings of the 2008 {ACM}
  {SIGMOD} international conference on {Management} of data (New York, NY,
  USA), {SIGMOD} '08, Association for Computing Machinery, June 2008,
  pp.~539--552.

\bibitem{joe_complexity}
Kiyoshi Ono and Guy~M. Lohman, \emph{Measuring the {Complexity} of {Join}
  {Enumeration} in {Query} {Optimization}}, {VLDB}, {VLDB} '90, 1990,
  pp.~314--325.

\bibitem{systemr}
P.~Griffiths Selinger, M.~M. Astrahan, D.~D. Chamberlin, R.~A. Lorie, and T.~G.
  Price, \emph{Access {Path} {Selection} in a {Relational} {Database}
  {Management} {System}}, {SIGMOD} '79 (San Francisco (CA)) (John Mylopolous
  and Michael Brodie, eds.), {SIGMOD} '79, Morgan Kaufmann, 1979, pp.~511--522.

\bibitem{dpconv}
Mihail Stoian and Andreas Kipf, \emph{{DPconv}: {Super}-{Polynomially} {Faster}
  {Join} {Ordering}}, Proceedings of the ACM on Management of Data \textbf{2}
  (2024), no.~6, 1--26 (en).

\end{thebibliography}

\end{document}